\definecolor{SALMON}{rgb}{0.9725490, 0.4627451, 0.4274510}
\definecolor{TURQUOISE}{rgb}{0.0000000, 0.7490196, 0.7686275}
\definecolor{ForestGreen}{rgb}{0.07,0.62,0.34}
\newtheorem{theorem}{Theorem}[section]
\newtheorem{proposition}[theorem]{Proposition}%
\DeclareMathOperator*{\argmax}{arg\,max}
\DeclareMathOperator*{\argmin}{arg\,min}
\newcommand{\tbf}{{\bf{t}}}
\newcommand{\pibf}{{\bf{\pi}}}
\newcommand{\Psibf}{{\bf{\Psi}}}
\newcommand{\Phibf}{{\bf{\Phi}}}
\newcommand{\thetabf}{{\bf{\theta}}}
\newcommand{\Ibf}{\mathbf{I}}
\newcommand{\ns}{n^{s}}
\title{Classification of multiple segmented profiles, application to the study of the neuronal protein Tau}
\author[1]{Vincent Brault}
\author[2]{Emilie Lebarbier\footnote{Corresponding author: \url{lebarbie@parisnanterre.fr}}}
\author[3]{Amélie Rosier}
\author[4]{Virginie Stoppin-Mellet}
\affil[1]{\setcounter{footnote}{0}Univ. Grenoble Alpes, CNRS, Grenoble INP\footnote{Institute of Engineering Univ. Grenoble Alpes}, LJK, 38000 Grenoble, France}
\affil[2]{Modal'X, Université Paris Nanterre, Nanterre, France}
\affil[3]{ESME Sudria, Paris, France}
\affil[4]{Univ. Grenoble Alpes, INSERM, U1216, CNRS, CEA, Grenoble Institut Neurosciences (GIN), 38 000 Grenoble, France}
\begin{document}

\maketitle

\abstract{This work is motivated by an application in neuroscience, in particular by the study of the (dys)functioning of a protein called Tau. The objective is to establish a classification of intensity profiles, according to the presence or absence of the protein and its monomer or dimer proportion. For this, we propose a Gaussian mixture model with a fixed number of clusters whose mean parameters are constrained and shared by the clusters. The inference of this model is done via the classical EM algorithm. The performance of the method is evaluated via simulation studies and an application on real data is done.}

\paragraph{Keywords:} Functional data, Model-based clustering, Constrained parameters, ECM algorithm, Segmentation.



\section{Introduction}


{\bf{Biological context.}} The Tau protein, found in the brain, plays a crucial role in regulating the architecture and stability of microtubule networks, which are essential for the transmission of nerve signals. Dysfunctions in the Tau protein are linked to serious neurodegenerative disorders, such as Alzheimer’s disease. For a better understanding of how Tau interacts with microtubules and regulates their dynamics, Isabelle Arnal’s team at the Grenoble Institute of Neuroscience has developed an innovative method to reconstitute dynamic microtubule networks in vitro in the presence of Tau protein. This approach allows real-time observation at the single-molecule level using fluorescence microscopy \citep{elie2015tau,stoppin2020studying}.
In these experiments, the Tau protein is labeled with a fluorophore, enabling its detection and tracking as it interacts with microtubules. However, a comprehensive understanding of the reaction kinetics requires determining the stoichiometry of Tau proteins—i.e., the number of molecules involved in the process. This necessitates specific experiments to correlate the detected fluorescence intensity with the number of fluorophores responsible for the signal. Light intensity profiles are generated over time from these experiments. In theory, these profiles resemble stepwise decreasing functions, as shown in Figure \ref{fig:Sauts_theorie}, where each step represents the extinction of a Tau protein. The number of steps, all of equal magnitude, indicates the number of proteins which are present.  \\


\begin{figure}
\begin{center}
\includegraphics[scale=0.37]{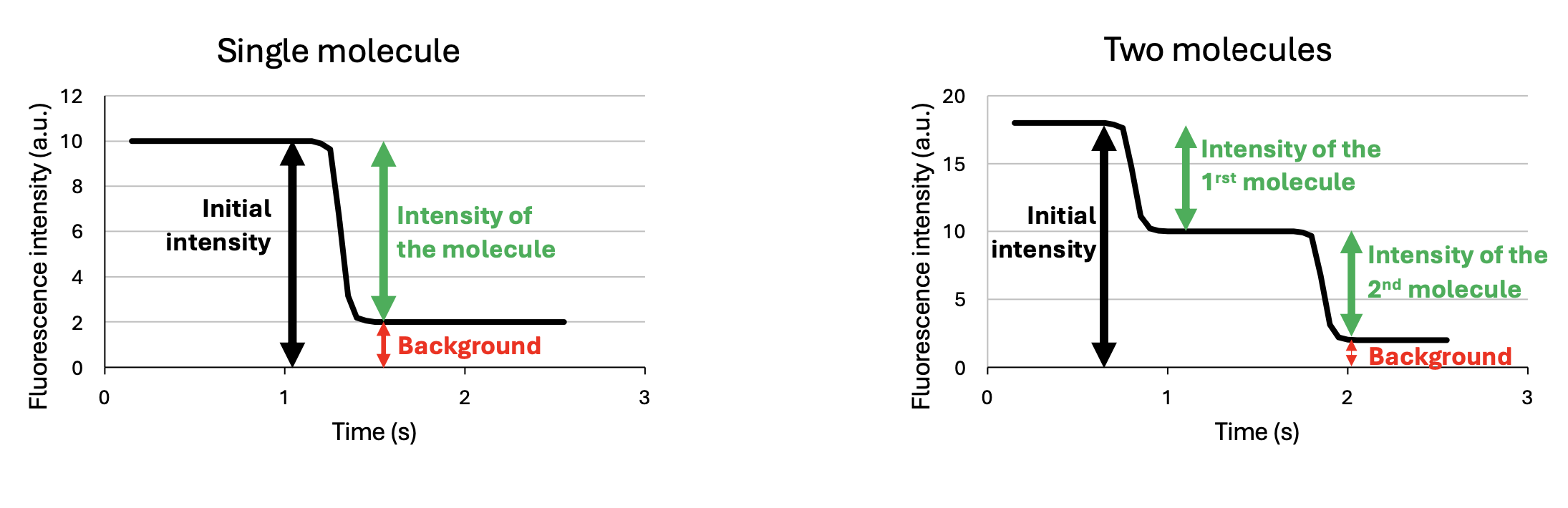}
\caption{Example of two light intensity profiles in theory. In all cases, the jump density (green) is the same, but the initial intensity and background may differ from profile to profile.}
\label{fig:Sauts_theorie}
\end{center}
\end{figure}

\noindent {\bf{Constrained segmentation.}} A natural way to analyze such profile is to use segmentation methods. Indeed they aim at detecting abrupt changes, called change-points, in the distribution of a signal. Segmentation problems arise in many areas such as in biology for the detection of chromosomal aberrations for example \citep{PRL05,LJK05} or in climatology for the detection of instrumental changes \citep{CM04,Mal13}. For our biological problem, a simple segmentation model as a change-point detection in the mean model of a Gaussian process could be used (model that has  been extensively studied in the literature, even recently \citep[see for example][] {HLL2010, PELT, frick2014multiscale, dette2016detecting, arlot2011segmentation}). However, in the biological experiments, proteins that have been switched off may be temporarily reignite before re-extinguishing, an event known as a “blink”. The consequence of such a phenomenon is the apparition of a positive crenel in the profile (with increasing then decreasing mean). The previous model will tend to detect this phenomenon, which is therefore undesirable. This situation is illustrated in Figure \ref{fig:segmentation_free_constrained} on the left on a real profile. One solution is to impose a decay constraint on the mean in the previous model. 

In an algorithmical point of view, it is now well known that segmentation methods have to deal with an inherent algorithmic complexity.
Indeed, the estimation of the change-point parameters requires to search over all the possible segmentations, which is prohibitive in terms of computational time when performed in a naive way. The Dynamic Programming (DP) algorithm \citep{bellman_approximation_1961} is the first algorithm that solved this problem exaclty in a fast way (quadratic with respect to the length of the profile). Unfortunately, DP only applies when the contrast to be
optimized is additive with respect to the segments \citep[see][]{CM04,PRL05}. The decay constraint imposes a link between segments and no longer allows the use of DP. To our knowledge, two papers deals with this problem and propose either a two-stage procedure \citep{gao2020estimation} that consists in doing an isotonic regression then runing the DP on the resulting potential change-points or using a new exact algorithm allowing constraint between successive segments \citep{JSSv101i10}. Note that the two methods recover the exact solution. In Figure \ref{fig:segmentation_free_constrained} on the right, the first method is applied. We can observe that the “blink” is not detected anymore as expected. Note that the decay size cannot be imposed in both strategies, it is estimated in the same time as the estimation of the change-points. \\

\begin{figure}
\begin{center}
\includegraphics[scale=0.15]{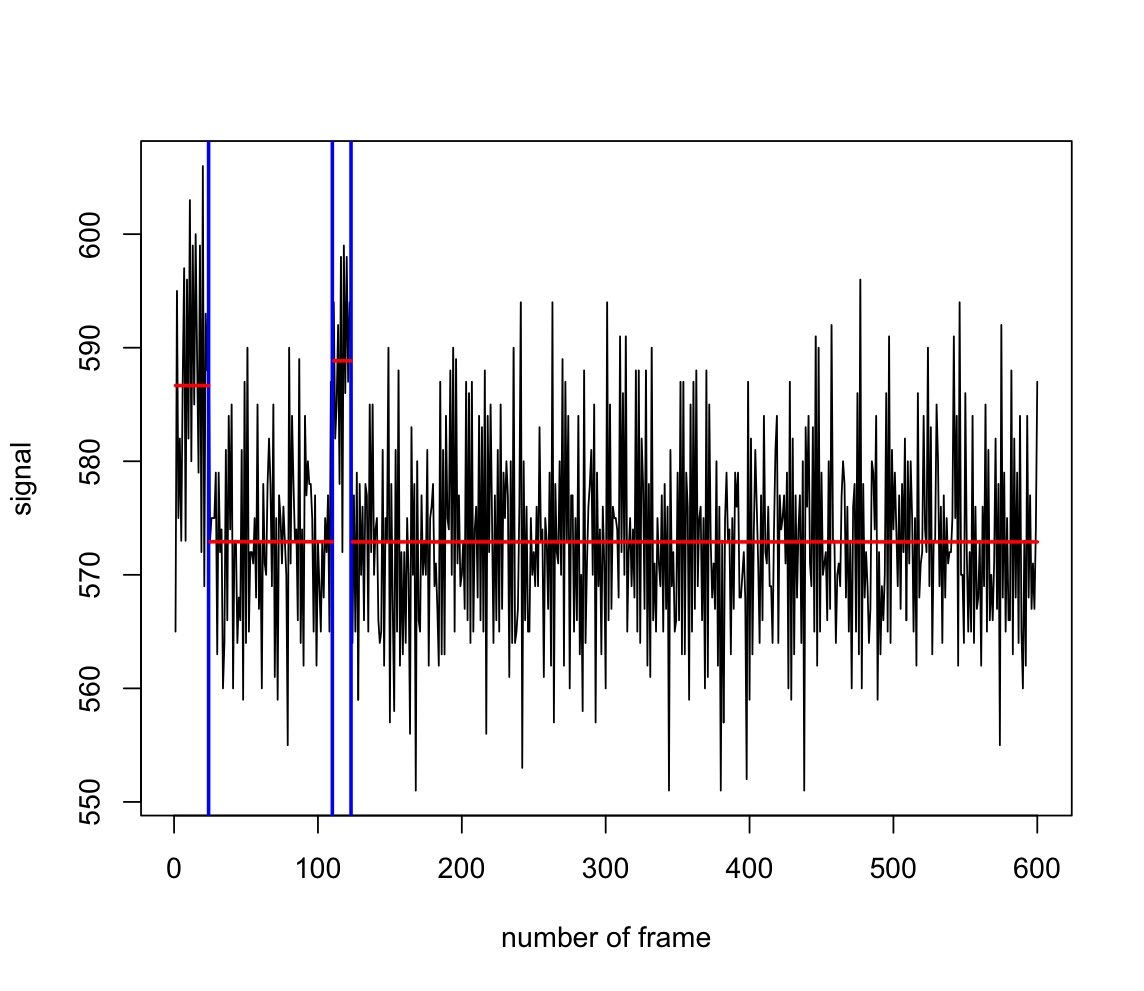}
\includegraphics[scale=0.15]{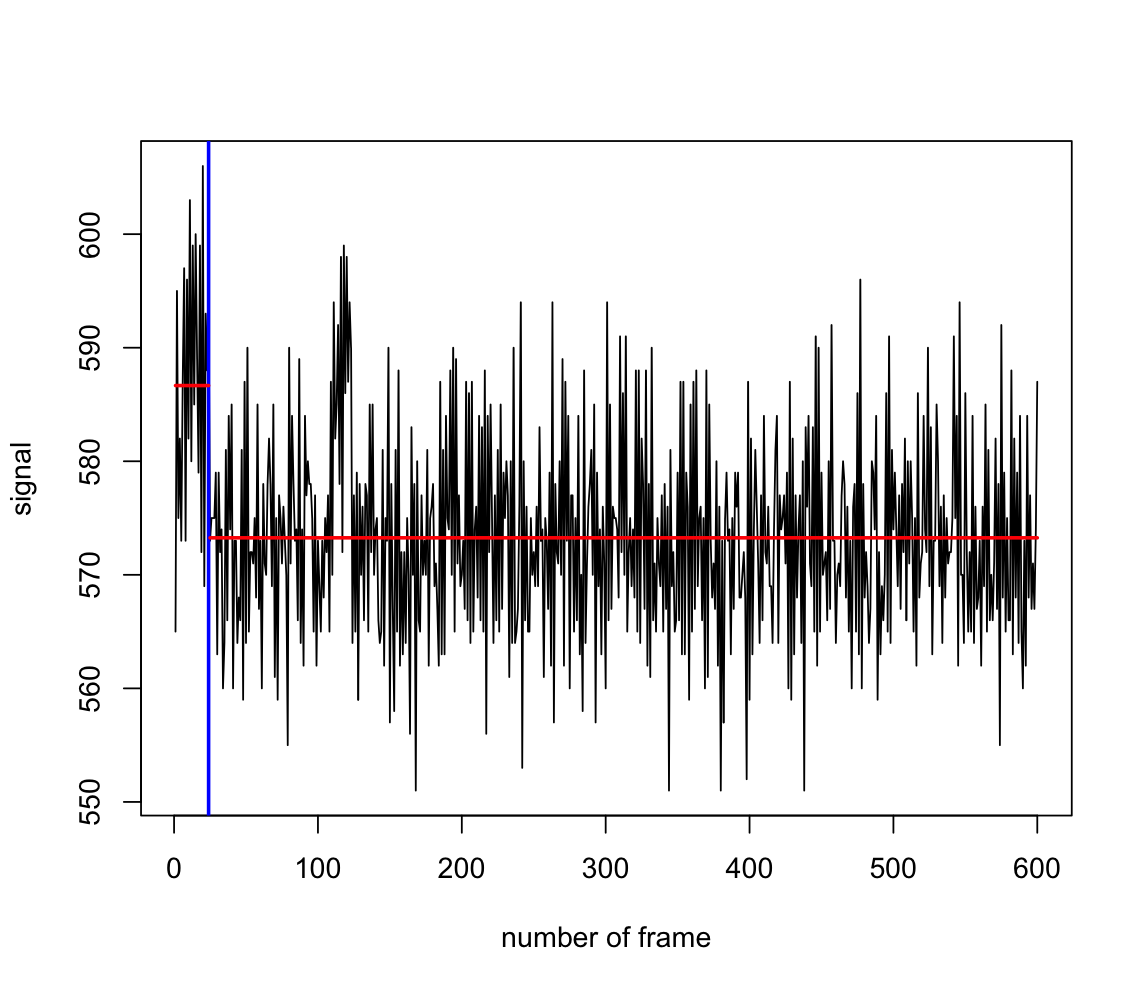}
\caption{Free segmentation (left) and constrained segmentation (right) of a given profile. The red line corresponds to the estimated mean and the vertical blue line(s) to the estimated change-point(s).}
\label{fig:segmentation_free_constrained}
\end{center}
\end{figure}

\noindent {\bf{Constrained classification.}} Our objective in this article is to classify the profiles according to biological constraints: (i) one (or two) decrease(s) in the mean but also (ii) the same decrease in height (same jump) for all profiles (see Figure \ref{fig:Sauts_theorie}). A first and most simple approach to perform this classification could be to use a two-step strategy: segment each profile by considering the decay constraint and cluster the profiles based on the segmentation results (as, for example, the number of jumps and their sizes), as is done classically for times series clustering where the used basis can be Fourier or splines (see \cite{JP14} for a review of these methods). However, constraint (ii) would not be taken into account, which argues for a unified clustering/segmentation model. In this paper, we propose a mixture model of Gaussian distributions whose mean is a decreasing piecewise constant function, differing according to four clusters in order to be in tune with the various biological phenomena: no jump for the cluster $1$, one jump of size $\delta$ for cluster $2$, two jumps of size $\delta$ each for cluster $3$ and one jump of size $2\delta$ for cluster $4$. We impose that the jump heights are shared by all profiles, while the jump times are profile-specific.

To infer the parameters, we propose to use the Expectation/Conditional Maximization algorithm \citep{MR93}. This algorithm is an instance of the traditional EM algorithm \citep{DLR77}, classically considered for the inference of mixture models, which replaces M-step of EM with several computationally CM-steps. Among the CM-steps, one is dedicated to the estimation of the change-point locations, thus once the size of the decay is given (previously estimated). DP-strategies can not still be applied because the segments remain linked through their means. However, here the profiles have a maximum of two change-points, thus a naive search is easily feasible, although a faster strategy would be preferable. \\

\noindent {\bf{Outline.}} In Section \ref{sec:model}, we present the model whose parameters are estimated by maximum likelihood using the ECM algorithm described in Section \ref{sec:inference}. Section \ref{sec:Fisher} discussed the precision of each estimate. Numerical experiments with synthetic data are presented in Section \ref{sec:simulation} and an application on a protein Tau dataset is presented in Section \ref{sec:application}. Section \ref{sec:discussion} presents a discussion.

\section{Model} \label{sec:model}

We consider $S$ profiles with $\ns$ observations each denoting $y^1,\dots,y^S$ with $y^s=(y_{1}^s,\dots,y_{n^s}^s)'$. We suppose that for each profile $s$, $y^s$ is a realization of a Gaussian process $Y^s$ with size $[n^s \times 1]$ whose the number of jumps in the mean depends on the cluster to which the profile belongs, whereas the jump size $\delta$ is the same for all the cluster. More precisely, the mean is either a constant (cluster $1$) or a decreasing piecewise constant function (one jump of size $\delta$ for cluster $2$, two jumps of size $\delta$ each for cluster $3$ and one jump of size $2\delta$ for cluster $4$). We emphasize that the jump location(s) and the baseline mean $\mu^s$ (mean before the first jump) are profile-specific. Formally, for each profile $s$, we introduce a sequence of independent random variables $Z^s=(Z_{1}^s,Z_{2}^s,Z_{3}^s,Z_{4}^s)$ such that for all $k\in[\![1,4]\!]$ 
\begin{displaymath}
Z_k^s=
\begin{cases}
1 & \text{if the profile $s$ belongs to the cluster $k$} \\
0 & \text{otherwise,}
\end{cases}
\end{displaymath}
and
\begin{displaymath}
\pi_k = \mathbb{P}(Z_k^s = 1),
\end{displaymath}
with $\sum_{k=1}^4 \pi_k = 1$. The proportion $\pi_k$ thus represents the prior probability that the profile $Y^s$ belongs to the $k^{th}$ cluster. We assume the between-profile independence (the $S$ profiles $Y^s$ are independent) and the within-profile independence (for each profile $s$, the $(Y^s_t)_{1\leq t\leq \ns}$ are independent), and that
\begin{displaymath}
Y^s \mid Z_{k}^s=1 \sim \mathcal{N}(m_{k}^s,\sigma_k^{2,s} \Ibf_{n^s}) \ \ \ \ \text{for $s\in[\![1\,,S]\!]$,}
\end{displaymath}
where $\Ibf$ is the squared identity matrix and the mean vector of size $[n^s \times 1]$ is for $t \in  \llbracket 1, n^s \rrbracket$ 
\begin{displaymath}
m_{k}^s(t)=
\begin{cases}
\mu_1^s & \text{if $k=1$} \\
\mu_{2}^s+\delta \ \mathbb 1_{t>t_{21}^s} & \text{if $k=2$}\\
\mu_{3}^s+\delta \ \mathbb 1_{t_{31}^s < t \leq t_{32}^s}+ 2 \delta \ \mathbb 1_{t>t_{32}^s} & \text{if $k=3$}\\
\mu_{4}^s+2  \  \delta \ \mathbb 1_{t>t_{41}^s} & \text{if $k=4$}
\end{cases}
\end{displaymath}
or equivalently $m_{k}^s =\mu_k^s +T^s_k \delta,\mathbb 1_{n^s}$ with $\mathbb{1}_n$ the column vector of size $n^s$ with all entries equal to $1$ and where, for profile $s$ in cluster $k$ 
\begin{itemize}
\item $T^s_k$ is the unknown incidence matrix of the change-points with size $[n^s \times 1]$ defined for each cluster as follows
\begin{equation} \label{eq:def_T_k}
T^s_1=\begin{bmatrix} \mathbb{0}_{n^s} \end{bmatrix}, \ \ \ \ T^s_2=\begin{bmatrix}  \mathbb{0}_{n^s_{21}} \\ \mathbb 1_{n^s_{22}} \end{bmatrix}, \ \ \ \ T^s_3=\begin{bmatrix}  \mathbb{0}_{n^s_{31}} \\ 
 \mathbb 1_{n^s_{32}}  \\  2 \mathbb 1_{n^s_{33}} \end{bmatrix} , \ \ \ \ T^s_4=\begin{bmatrix}  \mathbb{0}_{n^s_{41}} \\ 2 \mathbb 1_{n^s_{42}} \end{bmatrix},
\end{equation}
where $\mathbb{0}$ the column vector of size $n^s$ with all entries equal to $0$.
\item $t_{ki}^s$ denotes the $i$th change-point and $n_{ki}^s=t^{s}_{ki}-t^{s}_{k(i-1)}$ is the length of the associated $i$th segment $\llbracket t^{s}_{k(i-1)}+1, t^{s}_{ki} \rrbracket $, with the convention $0=t_{k0}^s<t_{k1}^s<t_{k2}^s<t_{k K_k}^s=n^s$ where $K_k$ is the number of segments in a profile belonging to cluster $k$. 
\end{itemize}
Examples of means $m_{k}^s$ of the four clusters are represented in Figure~\ref{fig:4pops}. \\

\begin{figure}
\centering
\begin{tikzpicture}
\node at (0,0) {\includegraphics[width=0.5\textwidth]{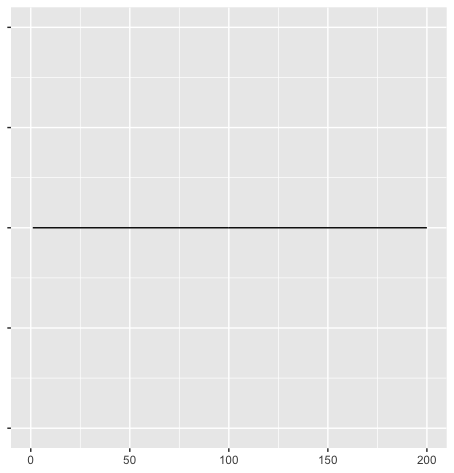}};
 \node at (0.2,4.3) {cluster $1$};
\node at (0.2,.5) {$\mu^s_1$};

\node at (9,0) {\includegraphics[width=0.5\textwidth]{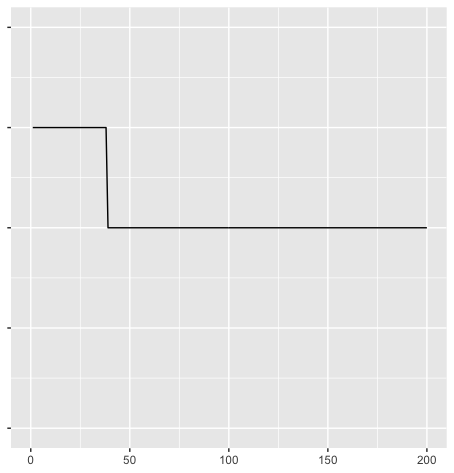}};
\node at (9.2,4.3) {cluster $2$};
\node at (7.2,1) {{$\delta$}};
\node at (6.3,2.1) {{$\mu^s_2$}};
\node at (9.5,0.4) {{$\mu^s_2+\delta$}};
\draw [dashed] (6.95,0.2) -- (6.95,-3.9);
\node at (7,-4.3) {{$t^s_{21}$}};

\node at (0,-10) {\includegraphics[width=0.5\textwidth]{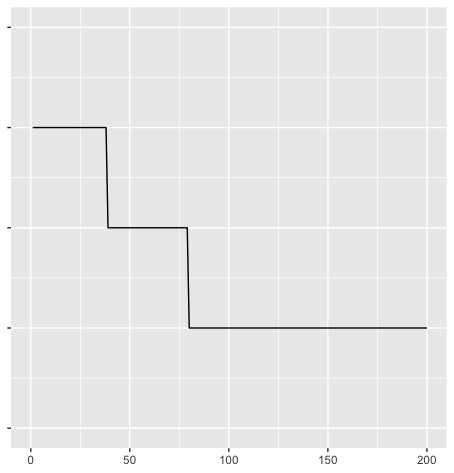}};
\node at (0.2,-5.7) {{cluster $3$}};
\node at (-2.3,-8.8) {{$\delta$}};
\node at (-0.4,-10.7) {{$\delta$}};
\node at (-2.8,-7.9) {{$\mu^s_3$}};
\node at (-1.4,-9.6) {{$\mu^s_3+\delta$}};
\node at (1.4,-11.3) {{$\mu^s_3+2 \delta$}};
\node at (-2.1,-14.4) {{$t^s_{31}$}};
\draw [dashed] (-2.05,-9.5) -- (-2.05,-14);
\node at (-0.6,-14.4) {{$t^s_{32}$}};
\draw [dashed] (-0.65,-11.6) -- (-0.65,-14);
\node at (9,-10) {\includegraphics[width=0.5\textwidth]{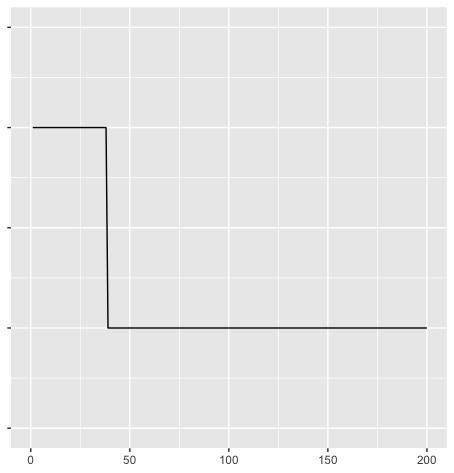}};
 \node at (9.2,-5.7) {{cluster $4$}};
\node at (7.3,-9.7) {{$2 \delta$}};
\node at (6.3,-7.9) {{$\mu^s_4$}};
\node at (9.3,-11.3) {{$\mu^s_4+2 \delta$}};
\node at (7,-14.4) {{$t^s_{41}$}};
\draw [dashed] (6.95,-11.6) -- (6.95,-14);
\end{tikzpicture}
\caption{Theoretical mean of the four clusters. The jump value $\delta<0$ is the same for all profiles and clusters. In this way, the mean value of the profile $s$ at time $t$ depends on cluster belonging $k$, the initial mean $\mu_k^s$ and, except for the cluster $1$, jump location(s) $\left(t^s_{k,j}\right)_{1\leq j\leq K_k}$ (with $K_2=K_4=1$ and $K_3=2$).} \label{fig:4pops}
\end{figure}

The parameters of this model are $\Psibf=(\pibf,\delta)$ with $\pibf=(\pi_k)_{k=1,\ldots,4}$, which involves all the profiles, and $\Phibf=(\Phibf_k^s)_{k,s}$, where $\Phibf_1^s=(\mu_1^s,\sigma_1^{2,s})$ and for $k \geq 2$ $\Phibf_k^s=(\mu_k^s,\sigma_k^{2,s},T^s_k)$, which are profile-specific. We note $\thetabf=(\Psibf,\Phibf)$.\\

\section{Inference}  \label{sec:inference}
We propose to estimate the model parameters by using the classical maximum likelihood method. Our model being a mixture model, we consider the EM algorithm \citep{DLR77} whose use is now well established in this framework. The use of EM lies in the complete-data $\log$-likelihood, that is the $\log$-likelihood of $(Y,Z)$ where $Z=(Z^s)_s$, and which is written
\begin{eqnarray*}
\log \mathbb{P}(Y,Z;\thetabf) 
&=&  \log\left(\prod_{s=1}^S \prod_{k=1}^4 \mathbb ( \pi_k f(Y^s;\Phibf_k^s,\delta))^{Z_k^s}\right) 
= \sum_{s=1}^S \sum_{k=1}^4 Z_k^s \log(\pi_k f(Y^s;\Phibf_k^s,\delta)), 
\end{eqnarray*}
where $f(Y^s;\Phibf_k^s,\delta)$ is the conditional density of $Y^s$ conditionally to cluster $k$, that is
\begin{displaymath}
f(Y_s;\Phibf_k^s,\delta) = \left(\frac{1}{\sigma_{k}^s\sqrt{2\pi}}\right)^{n^s}\exp\left(-\frac{1}{2\sigma_{k}^{2,s}}\| Y^s-m_{k}^s\|_2^2\right)=\left(\frac{1}{\sigma_{k}^s\sqrt{2\pi}}\right)^{n^s}\exp\left(-\frac{1}{2\sigma_{k}^{2,s}}\| Y^s- \mu_{k}^{s} -T^s_k \delta\|_2^2\right),
\end{displaymath}
where $T^s_k$ is given in \eqref{eq:def_T_k}.\\

The EM algorithm aims at maximizing the conditional expectation of the complete-data $\log$-likelihood, given the observed data $Y$, that is 
$$
Q(\thetabf,\thetabf')=\mathbb{E}_{\thetabf'}[\log \mathbb{P}(Y,Z;\thetabf) \mid Y],
$$
where $\mathbb{E}_{\thetabf'}[.]$ denotes the expectation operator using $\thetabf'$ as the parameter value. It is an iterative algorithm which combines two steps: the E-step and the M-step described in the two following subsections respectively. \\

\subsection{E-step.} This step consists in computing the previous conditional expectation using the current fit $\thetabf^{(h)}$ for $\thetabf$:
\begin{eqnarray}\label{Q}
Q(\thetabf,\thetabf^{(h)})&=& \mathbb{E}_{\thetabf^{(h)}} \left[ \sum_{s=1}^S \sum_{k=1}^4 Z_k^s \log( \pi_k f(Y^s;\Phibf_k^s,\delta)) \Bigg| Y\right] = \sum_{s=1}^S \mathbb{E}_{\thetabf^{(h)}} \left[ \sum_{k=1}^4 Z_k^s \log( \pi_k f(Y^s;\Phibf_k^s,\delta)) \Bigg| Y^s\right]  \nonumber \\
&=& \sum_{s=1}^S \sum_{k=1}^4 \mathbb{E}_{\thetabf^{(h)}} \left[Z_k^s| Y^s\right]\log( \pi_k f(Y^s;\Phibf_k^s,\delta)) =  \sum_{s=1}^S \sum_{k=1}^4 {\tau_k^s}^{(h+1)}\log( \pi_k f(Y^s;\Phibf_k^s,\delta)),
\end{eqnarray}
with
\begin{eqnarray*}
{\tau_k^s}^{(h+1)} &=& \mathbb P_{\thetabf^{(h)}}(Z_k^s=1 |Y^s) = \frac{\pi_k^{(h)} f(Y^s;\Phibf_k^{s,(h)},\delta^{(h)})}{\sum_{l=1}^4 \pi_l^{(h)} f(Y^s;\Phibf_l^{s,(h)},\delta^{(h)})},
\end{eqnarray*}
which represents the posterior probability of the profile $Y^s$ to belong to cluster $k$.

\subsection{CM-steps.} This global M-step consists in updating the parameters by maximizing the conditional expectation resulting from the E-step:
\begin{displaymath}
{\thetabf}^{(h+1)} = \argmax_{\thetabf} \ Q(\thetabf,\thetabf^{(h)}). 
\end{displaymath}
We use here the ECM algorithm to break down the maximization of $Q(\thetabf,\thetabf^{(h)})$ with respect to $\thetabf$ into simpler CM-steps which focus on one parameter, the others being fixed. The convergence properties of ECM are provided in \cite{MR93}.\\

\noindent {\bf {Estimation of $\pi$.}} The proportions $\pi_k$ are estimated under the constraint that $\sum_{k=1}^4\pi_k = 1$. We get 
\begin{displaymath}
\pi_k^{(h+1)} = \argmax_{\pibf} \ Q(\thetabf,\pibf,\delta^{(h)},\Phibf^{(h)}) =\frac{\sum_{s=1}^S {\tau_k^s}^{(h+1)}}{S} \ \ \ \text{for $k=1,\ldots,4$.}
\end{displaymath}

\noindent {\bf {Estimation of $\delta$.}} We get
\begin{eqnarray*}
\delta^{(h+1)} &=& \argmax_{\delta} \ 
 Q(\thetabf,\pibf^{(h+1)},\delta,\Phibf^{(h)}), \\
&=& \argmin_{\delta} \sum_{s=1}^S \sum_{k=1}^4 \frac{{\tau_k^s}^{(h+1)}}{\sigma_k^{2, (h)}} \| Y^s- \mu_{k}^{s (h)} -T_k^{s (h)} \delta\|_2^2, \\
&=& \frac{\sum_{s=1}^S \sum_{k=2}^4 \frac{{\tau_k^s}^{(h+1)}}{\sigma_k^{2, (h)}} \ {}^t  T_k^{s (h)} (Y^s- \mu_{k}^{s (h)})}{\sum_{s=1}^S \sum_{k=2}^4 \frac{{\tau_k^s}^{(h+1)}}{\sigma_k^{2, (h)}} \ {}^t T_k^{s (h)} T_k^{s (h)} },
\end{eqnarray*}
where ${}^t H$ denotes the transpose matrix of the matrix $H$. The detailed calculation gives $\delta^{(h+1)}=A/B$ with
\begin{eqnarray*}
A &=&  \sum_{s=1}^S \left\{ \sum_{k=2}^3  \frac{{\tau_{k}^s}^{(h+1)}}{\sigma_{k}^{2,s (h)}} \sum_{i=2}^{K_k} (i-1) \sum_{t=t^{s (h)}_{k (i-1)}+1}^{t^{s (h)}_{ki}}  (Y_t^s-\mu_{k}^{s (h)}) +  2 \frac{{\tau_{4}^s}^{(h+1)}}{\sigma_{4}^{2,s (h)}} \sum_{t=t^{s (h)}_{41}+1}^{n^s} (Y_t^s-\mu_{4}^{s (h)}) \right\},
\end{eqnarray*}
and
\begin{displaymath}
B = \sum_{s=1}^S \left\{ \frac{{\tau_{2}^s}^{(h+1)}}{\sigma_{2}^{2,s (h)}} {n_{22}^{s (h)}} + \frac{{\tau_{3}^s}^{(h+1)}}{\sigma_{3}^{2,s (h)}}({n_{32}^{s (h)}}+4{n_{33}^{s (h)}}) + 4\frac{{\tau_{4}^s}^{(h+1)}}{\sigma_{4}^{2,s (h)}}{n_{42}^{s (h)}}\right\}.
\end{displaymath}

\noindent {\bf {Estimation of $\Phibf$.}} This step is reduced to a constraint segmentation framework in which each profile $Y^s$ need to be segmented in each cluster with an fixed high jump $\delta^{(h+1)}$. For each $s=1,\ldots,S$ and $k=1,\ldots,4$,  
\begin{eqnarray*}
\Phibf_k^{s (h+1)} &= &\argmax_{\Phibf_k^{s}} 
 \  Q(\thetabf,\pibf^{(h+1)},\delta^{(h+1)},\Phibf),
\end{eqnarray*}
i.e. 
\begin{eqnarray*}
(\mu_k^{s (h+1)},\sigma_k^{2,s (h+1)}, T^{s  (h+1)}_{k}) &\in &\argmax_{\mu_k^{s}, \sigma_k^{2,s},T^{s}_{k}} 
 \ Q(\thetabf,\pibf^{(h+1)},\delta^{(h+1)},\mu_k^{s},\sigma_k^{2,s},T^{s}_{k}), \\
&=& \argmax_{\mu_k^{s}, \sigma_k^{2,s},T^{s}_{k}}  \  \left [ -  \frac{n^s}{2} \log{(2 \pi \sigma_k^{2,s})}-\frac{1}{2  \sigma_k^{2,s}} \| Y^s- \mu_{k}^{s} -T^s_k \delta^{(h+1)}\|_2^2 \right ].
\end{eqnarray*}
We differentiate the case of no change (cluster $1$) from the others:
\begin{itemize}
\item for $k=1$, i.e. no change in the mean function, 
\begin{eqnarray*}
(\mu_1^{s (h+1)},\sigma_1^{2,s (h+1)}) &= &\argmax_{\mu_1^{s}, \sigma_1^{2,s}}   \  - \frac{n^s}{2} \log{(2 \pi \sigma_1^{2,s})}-\frac{1}{2  \sigma_1^{2,s}} \| Y^s- \mu_{1}^{s} \|_2^2, 
\end{eqnarray*}
and we obtain 
\begin{equation*}
{\mu_1^s}^{(h+1)} = \overline{Y}^s =\frac{1}{n^s}\sum_{t=1}^{n^s} Y_t^s   \ \ \ \ \text{and} \ \ \ \ \sigma_1^{2,s (h+1)} = \frac{1}{n^s}\sum_{t=1}^{n^s} \left ( Y_t^s-\overline{Y}^s \right )^2.
\end{equation*}
\item for $k \geq 2$, if the change-points $\tbf^{s}_{k}$ are known, the estimation of the mean baseline and the variance for each profile $s$ is straightforward. We get 
\begin{equation*}
\widehat{\mu}_k^s (T^{s}_{k})=\frac{1}{n^s} \sum_{t=1}^{n^s}\left (Y_t^s-\left(T^s_k\right)_t \delta^{(h+1)}\right )  \ \ \ \ \text{and} \ \ \ \ {\widehat\sigma_k^{2,s}}(T^{s}_{k}) = \frac{1}{n^s}\| Y^s-{\widehat{\mu}_k^s}(T^{s}_{k})  -T^s_k \delta^{(h+1)}\|_2^2.
\end{equation*}
The estimation of the change-points is then obtained as follows
\begin{eqnarray*}
T^{s  (h+1)}_k &= & \argmax_{T^{s}_k} \max_{{\mu}_k^s} \max_{\sigma_k^{2,s} } 
 \ Q(\thetabf,\pibf^{(h+1)},\delta^{(h+1)},\mu_k^{s},\sigma_k^{2,s},T^{s}_k) \\
&=&\argmax_{T^{s}_k} 
 \ Q(\thetabf,\pibf^{(h+1)},\delta^{(h+1)},{\widehat{\mu}_k^s}(T^{s}_k),{\widehat\sigma_k^{2,s}}(T^{s}_k),T^{s}_k), \\
&=& 
\argmax_{T^{s}_k} \ \left [ - \frac{n^s}{2} \log{(2 \pi {\widehat\sigma_k^{2,s}}(T^{s}_k))}-\frac{1}{2  {\widehat\sigma_k^{2,s}} (T^{s}_k)}  \| Y^s-{\widehat{\mu}_k^s}(T^{s}_{k})  -T^s_k \delta^{(h+1)}\|_2^2 \right ], \\
&=& \argmax_{T^{s}_k} \  -\frac{n^s}{2} \left [ \log{(2 \pi {\widehat\sigma_k^{2,s}}(T^{s}_k))} +1 \right ], \\
&=& \argmin_{T^{s}_k} \  \| Y^s-{\widehat{\mu}_k^s}(T^{s}_{k})  -T^s_k \delta^{(h+1)}\|_2^2, 
\end{eqnarray*}
and the resulting estimators of the mean baseline and the variance are
\begin{equation*}
{\mu}_k^{s (h+1)}={\widehat{\mu}_k^s}(T^{s (h+1)}_k) \ \ \ \ \text{and} \ \ \ \ {\sigma_k^{2,s (h+1)}}
 ={\widehat\sigma_k^{2,s}}(T^{s (h+1)}_k).
\end{equation*}
As an example, let take $k=2$. In this case, the mean is affected by only one change-point $t^s_{21}$. If this change-point is known, we have
\begin{equation*}
{\widehat{\mu}_2^s}(t^{s}_{21}) = \frac{1}{n^s}\sum_{t=1}^{n^s} \left (Y_t^s-\delta^{(h+1)} \mathbb 1_{t>t_{21}^s} \right )  \ \ \ \ \text{and} \ \ \ \ {\widehat\sigma_2^{2,s}}(t^{s}_{21}) = \frac{1}{n^s}\sum_{t=1}^{n^s} \left  (Y_t^s-{\widehat{\mu}_2^s}(t^{s}_{21})-\delta^{(h+1)} \mathbb 1_{t>t_{21}^s}  \right )^2.
\end{equation*}
Then
\begin{eqnarray*}
t_{21}^{s  (h+1)} &= & \argmin_{0 < u < n^s} \sum_{t=1}^{n^s} \left (Y_t^s-{\widehat{\mu}_2^s}(u)-\delta  \mathbb 1_{t>u} \right )^2.
\end{eqnarray*}
\end{itemize}

The final estimators are $\widehat{\pi}_k, \widehat{\delta}, \widehat{\mu}_k^s, \widehat\sigma_2^{2,s}, \widehat{T}_k^s$ for $k=1,\dots,4$ and $s=1,\ldots,S$.

\section{Assessing the precision of estimates} \label{sec:Fisher}

In the missing-data framework where parameters are estimated using the EM algorithm, the observed Fisher information is often used to assess the precision of the estimators. However, because of the presence of latent variables, there is no simple form of it. Instead, the expected complete-data Fisher information matrix, taken with respect to the conditional distribution of the missing data given the observed data, provides a reasonable approximation of the uncertainty. Although it tends to overestimate precision, it remains useful for identifying the relative importance of various factors - such as model specification and sample size - that influence estimator precision \citep{mclachlan2008em, louis1982finding}. \\

We aim at calculating 
\begin{equation*}
\mathcal{I}(\widehat{\theta})= \mathbb{E}_{X | Y} [I^c(\theta)] \big|_{\theta=\widehat{\theta}},
\end{equation*}
where $X=(Y,Z)$ and $I^c(\theta)$ is given by 
\begin{equation*}
I^c(\theta)=- \frac{\partial^2 }{\partial \theta \partial {}^t \theta} \log{\mathbb{P}(Y,Z;\theta)},
\end{equation*}
and so-called the complete-data information matrix. It is straightforward to see that this matrix is almost block diagonal
\begin{equation*}
I^c(\theta)=\begin{bmatrix}
I^c(\pi) & 0 & 0 &0 \\
0 & I^c(\delta) & I^c(\delta,\mu) & I^c(\delta,\sigma^2) \\ 
0  & I^c(\delta,\mu) & I^c(\mu) &0 \\
0 & I^c(\delta,\sigma^2) & 0 & I^c(\sigma^2) \\
\end{bmatrix}.
\end{equation*}
In Appendix \ref{sec:Calcul-Fisher}, we obtain that  
\begin{itemize}
\item for the proportion parameter
\begin{eqnarray*}
\mathbb{E}_{X | Y} [I^c(\pi)]\big|_{\pi=\widehat{\pi}} &=& S  
\begin{bmatrix}
\frac{1}{\widehat{\pi}_1}+\frac{1}{\widehat{\pi}_4} & \frac{1}{\widehat{\pi}_4} & \frac{1}{\widehat{\pi}_4} \\
\frac{1}{\widehat{\pi}_4}& \frac{1}{\widehat{\pi}_2}+\frac{1}{\widehat{\pi}_4} & \frac{1}{\widehat{\pi}_4} \\
\frac{1}{\widehat{\pi}_4}& \frac{1}{\widehat{\pi}_4} & \frac{1}{\widehat{\pi}_3}+\frac{1}{\widehat{\pi}_4}
\end{bmatrix},
\end{eqnarray*}
\item for the jump parameter
\begin{eqnarray*}
\mathbb{E}_{X | Y} [I^c(\delta)]\big|_{\theta=\widehat{\theta}} &=&  \sum_{s=1}^S \sum_{k=2}^4 \frac{ \tau^s_k}{\widehat{\sigma}^{2,s}_k} \ {}^t {T_s^k} T_s^k\\
&=&  \sum_{s=1}^S \left (  \frac{ n_{22}^s\tau^s_2}{\widehat{\sigma}^{2,s}_2} + \frac{ \tau^s_3 (n_{32}^s+4 n_{33}^s)}{\widehat{\sigma}^{2,s}_3} + \frac{ 4 n_{42}^s \tau^s_4 }{\widehat{\sigma}^{2,s}_4}\right ),
\end{eqnarray*}

\item for the mean parameter, $\mathbb{E}_{X | Y} [I^c(\mu)]\big|_{\theta=\widehat{\theta}}$ is a $[4 S \times 4S]$ diagonal matrix with 
\begin{eqnarray*}
\mathbb{E}_{X | Y} [I^c(\mu)]\big|_{\theta=\widehat{\theta}} &=& \text{diag} \left[ \frac{n^s \tau^s_k}{\widehat{\sigma}^{2,s}_k} \right]_{ks},
\end{eqnarray*}

\item and for the variance parameter, $\mathbb{E}_{X | Y} [I^c(\sigma^2)]\big|_{\theta=\widehat{\theta}}$ is a $[4 S \times 4S]$ diagonal matrix with 
\begin{eqnarray*}
\mathbb{E}_{X | Y} [I^c(\sigma^2)]\big|_{\theta=\widehat{\theta}} &=& \text{diag} \left[ \frac{n^s \tau^s_k}{\widehat{\sigma}^{4,s}_k} \right]_{ks}. 
\end{eqnarray*}
\end{itemize}

As expected, the precision of the different estimators is linked to the statistical unit used to estimate them: for the proportion parameter, it only depends on the number of profiles and not on the length of the profiles, contrary for the mean and variance parameters, whose precision depends on and increases with the length of the profiles. For the jump $\delta$, which is common to all clusters, the precision naturally depends on both the number of profiles and their lengths, more precisely on the length of the segments associated to means which include $\delta$. The two terms $I^c(\delta,\mu)$ and $I^c(\delta,\sigma^2)$ are not nulls (not shown here, see Appendix \ref{sec:Calcul-Fisher} for their calculations), indicating that the estimator of the jump is correlated to those of the baseline mean (positively) and the variance, as logical.   \\

Similar results have been found in \citet[chap. 7]{picard2005process} for a different problem, but one that also involves classification and segmentation purposes. His problem is to classify the segments of a segmented profile inferred using a DP-EM algorithm. Similarity in the sense that the concatenation of the whole profile can be viewed as one profile with fixed boundaries, and so the number of profiles in our case plays the role of the number of segments in Picard's model, but the difference is that the distribution parameters are here profile-specific.

\section{Simulation study}  \label{sec:simulation}

In this section, we conduct several simulation studies (1) to study the impact of using several iterations of the M-step or just one, but over several initializations, (2) to study the asymptotic behavior when both the number of profiles and their length increase and (3) to illustrate the importance of the jump estimating. 

\subsection{Simulation design and quality criteria} \label{sec:InitVSIteration:design}

\subsubsection{Simulation design.} We present the basic simulation design and derive some specificities according to the studies carried out.  \\

\noindent {\bf{Basic simulation design.}} We consider $S$ profiles which all have the same length $\ns$. For profile $s$, the probability to belong to each cluster is $1/4$, the positions of the change-points are uniformly distributed on the interval $\llbracket 0.1 \ \ns, \ 0.9 \ \ns \rrbracket$ with a distance of mid$\_$step$=\{0,\  0.3 \ \ns, \ 0.6 \ \ns\}+1$ between the two jumps in cluster $3$ and the baseline mean $\mu_k=2$ for each cluster. We consider different size of jumps $\delta\in\{-5,-2,-1,-0.5\}$. For the estimation, we denote by $Nb_{\text{init}}$ the number of initializations tested at the EM initialization, by $Nb_{\text{M-step}}$ the number of iterations performed within the M-step and by $Nb_{\text{max-EM}}=100$ the maximal number of iterations (one iteration corresponds to both E step and M-step) of the algorithm.\\

\noindent {\bf{Study 1: impact of $Nb_{\text{init}}$ and $Nb_{\text{M-step}}$.}} We consider a fixed number of profiles $S=100$ with length $\ns=100$ and we study the quality of the estimates for different values of $Nb_{\text{init}}=1$ or $10$ random initializations, only the best being kept in the sense of maximum likelihood, and $Nb_{\text{M-step}}=1$ or $10$. In the results, this is codified with the couple type=$(Nb_{\text{init}},Nb_{\text{M-step}})$. Thus, for couples worth $(10,1)$ and $(1,10)$, the maximum number of iterations performed in total by the procedure is the same and is equal to $Nb_{\text{init}}\times Nb_{\text{M-step}}\times Nb_{\text{max-EM}}=1000$; to compensate, $Nb_{\text{max-EM}}=1000$ is taken for the couple $(1,1)$. From this study, a couple is chosen for the two following.  \\

\noindent {\bf{Study 2: asymptotic behavior.}} We examine if the quality of the estimates improves as we increase the number $S$ of profiles and their length $\ns$. We consider for both the values $\{100,250,500\}$. \\

\noindent {\bf{Study 3: importance of jump estimating.}} We study the impact of an underestimation or an overestimation of the jump $\delta$ on the classification. To this aim, for an original jump $\delta$, we fix it in the classification procedure to $\delta'\in\{0.8 \ \delta,\ 0.9 \ \delta,\ \delta,\ 1.1\ \delta,\ 1.2\ \delta,\ 1.5\ \delta,\ 2 \ \delta\}$. We consider $S=100$ and $\ns=250$.

\subsubsection{Quality criteria.} The performances are assessed according to the following criteria:
\begin{itemize}
  \item the relative distance between the estimated and the true jump $d_r(\widehat{\delta},\delta)=\frac{\widehat{\delta}-\delta}{|\delta|}$;
 
  \item the normalized infinite distance between the true change-point locations and the estimated ones in the true cluster. More precisely, for a profile $s$ in cluster $k$, we denote $t^{s(k)}=(t_{kh}^s)_h$ the true change-points and $\widehat{t}^{s(k)}=(\widehat{t}^s_{k \ell})_{\ell}$ the estimated change-points in cluster $k$, the normalized infinite distance is over all the profiles, for each cluster $k\in\{2,3,4\}$, 
\begin{align*}
    d\left(\bold{t}^{\left(k\right)},\widehat{\bold{t}}^{\left(k\right)}\right)=\frac{1}{Z_k^+}\underset{\text{with }Z_k^s=1}{\sum_{s=1}^S}\frac{d_{\infty}(t^{s\left(k\right)},\widehat{t}^{s\left(k\right)})}{\ns}
\end{align*}
where $Z_k^+=\sum_{s=1}^SZ_k^s$ represents the number of profiles $s$ belonging to cluster $k$ and for each $s$
\begin{align*}
d_{\infty}(t^{s(k)},\widehat{t}^{s(k)})=  \left|t_{k1}^s- \widehat{t}_{k1}^s\right|\text{ for }k\in\{2,4\}\quad\text{ and }\quad d_{\infty}(t^{s(3)},\widehat{t}^{s(3)}) =  \max\left(\left|t_{31}^s- \widehat{t}_{31}^s\right|,\left|t_{32}^s- \widehat{t}_{32}^s\right|\right).
\end{align*}
  \item the misclassification rate.
\end{itemize}


\subsection{Results }

\subsubsection{Study 1.} 

\noindent {\bf{Accuracy of the common jump and the estimated change-points.}} Figures \ref{fig:Delta_norm} and \ref{fig:Hausdorff} represent the distribution of the relative distance $d_r(\widehat{\delta},\delta)$ and the normalized infinite distance $d\left(\bold{t}^{\left(k\right)},\widehat{\bold{t}}^{\left(k\right)}\right)$, respectively. We observe that when the detection problem is easy ($\delta$ large), the jump is well estimated and the change-points are well positioned (the infinite distance is close to $0$), while when the detection problem becomes difficult ($\delta$ small), the jump is underestimated and the change-points are less well positioned, particularly for cluster $3$. This observation is expected.\\

\begin{figure}
    \centering
    \includegraphics[width=\linewidth]{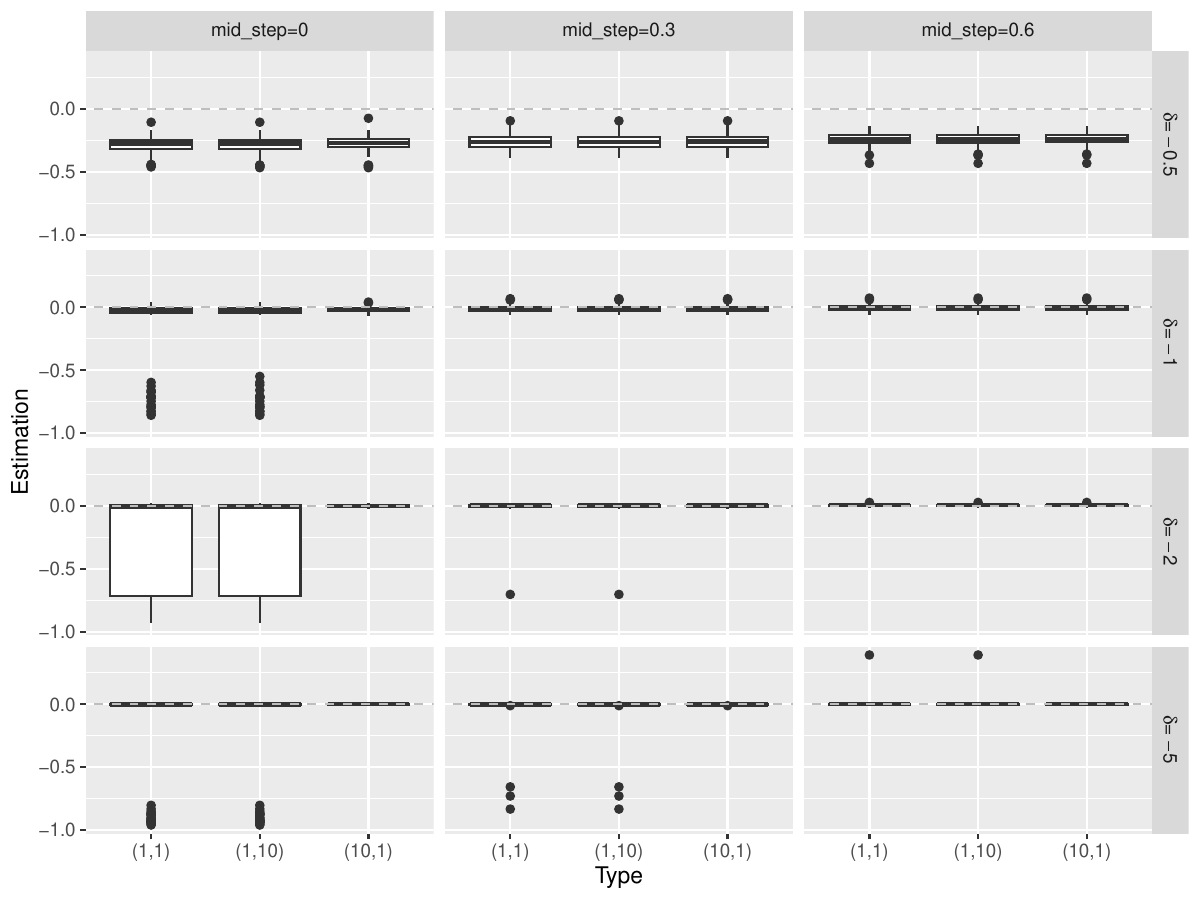}
    \caption{Boxplot of $d_r(\widehat{\delta},\delta)$ for the different values of $\delta$ (lines), the different length between the two change-points in cluster $3$ (columns) and different values for the couple type=$(Nb_{\text{init}},Nb_{\text{M-step}})$ where $Nb_{\text{init}}$ is the number of initializations and $Nb_{\text{M-step}}$ is the number of iterations in the M-step (x-axis).}
    \label{fig:Delta_norm}
\end{figure}

\begin{figure}
    \centering
    \includegraphics[width=\linewidth]{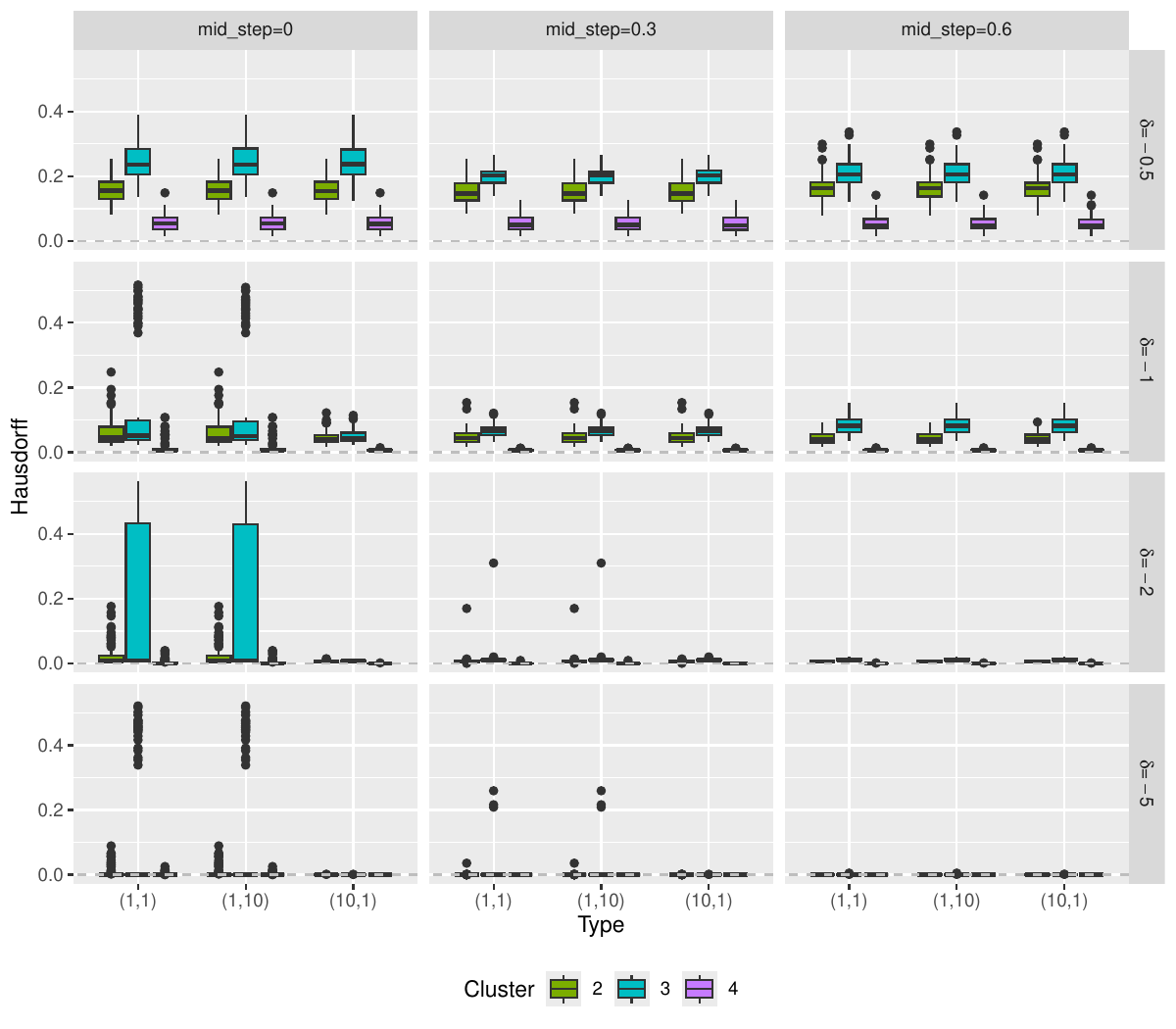}
    \caption{Boxplot of the normalized infinite distance $d\left(\bold{t}^{\left(k\right)},\widehat{\bold{t}}^{\left(k\right)}\right)$ separated according to the true cluster (colors) for the different values of $\delta$ (lines), the different length between the two change-points in cluster $3$ (columns) and different values for the couple type=$(Nb_{\text{init}},Nb_{\text{M-step}})$ where $Nb_{\text{init}}$ is the number of initializations and $Nb_{\text{M-step}}$ is the number of iterations in the M-step (x-axis).}
    \label{fig:Hausdorff}
\end{figure}

\noindent {\bf{Misclassification.}} Figure \ref{fig:Erreur} represents the classification error for the different studied configurations. It shows that the error rate increases as the jump size decreases. As can be see in Figure \ref{fig:Origine_destination}, which represents the proportions of cluster membership estimates as a function of the true cluster, only cluster $3$ is correctly estimated. For profiles in other clusters, if the detection problem is difficult ($\delta=-1$), all profiles (or almost all) are classified in cluster $3$, and if the detection problem is a little less difficult ($\delta=-2$), profiles from cluster $4$ are essentially classified in clusters $3$ and $4$ (this is more marked when mid\_step= 0, i.e. the two change-points of profiles in cluster $3$, are very closed), profiles from cluster $2$ in clusters $2$ and $3$, and those from cluster $1$ in the first three clusters. When the detection problem is easy ($\delta=-5$), all the profiles are correctly classified. In Appendix \ref{sec:P34}, we show that, theoretically, a profile originating from cluster $4$ has a non-negligible probability of belonging to cluster $3$, especially for small jumps (i.e., small $\delta$), particularly when the resulting segmentation in the cluster has two closely spaced change-points (see Figure \ref{fig:prob}). \\

\begin{figure}
    \centering
    \includegraphics[width=\linewidth]{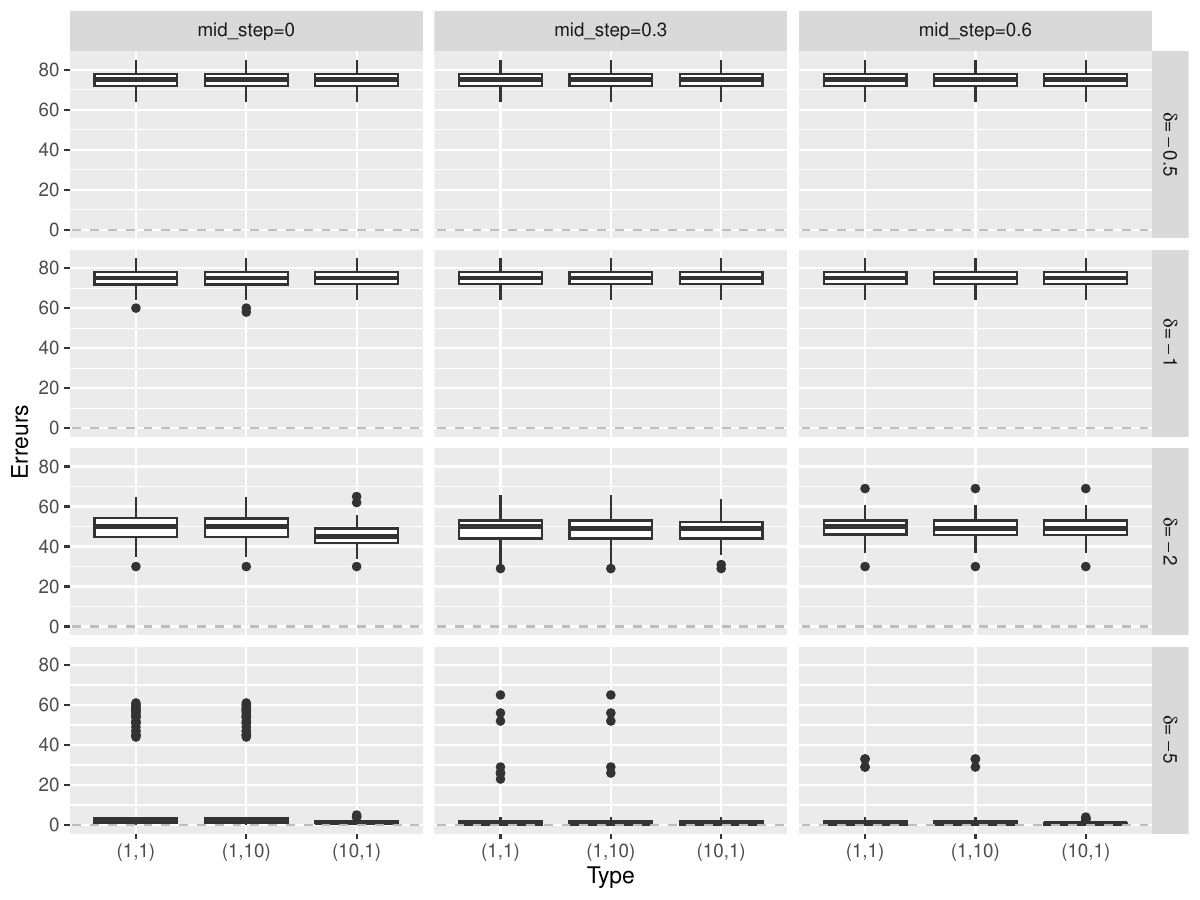}
    \caption{Boxplot of the misclassification rate for the different values of $\delta$ (lines), the different length between the two change-points in cluster $3$ (columns) and different values for the couple type=$(Nb_{\text{init}},Nb_{\text{M-step}})$ where $Nb_{\text{init}}$ is the number of initializations and $Nb_{\text{M-step}}$ is the number of iterations in the M-step (x-axis).}
    \label{fig:Erreur}
\end{figure}

\begin{figure}
    \centering
    \includegraphics[width=\linewidth]{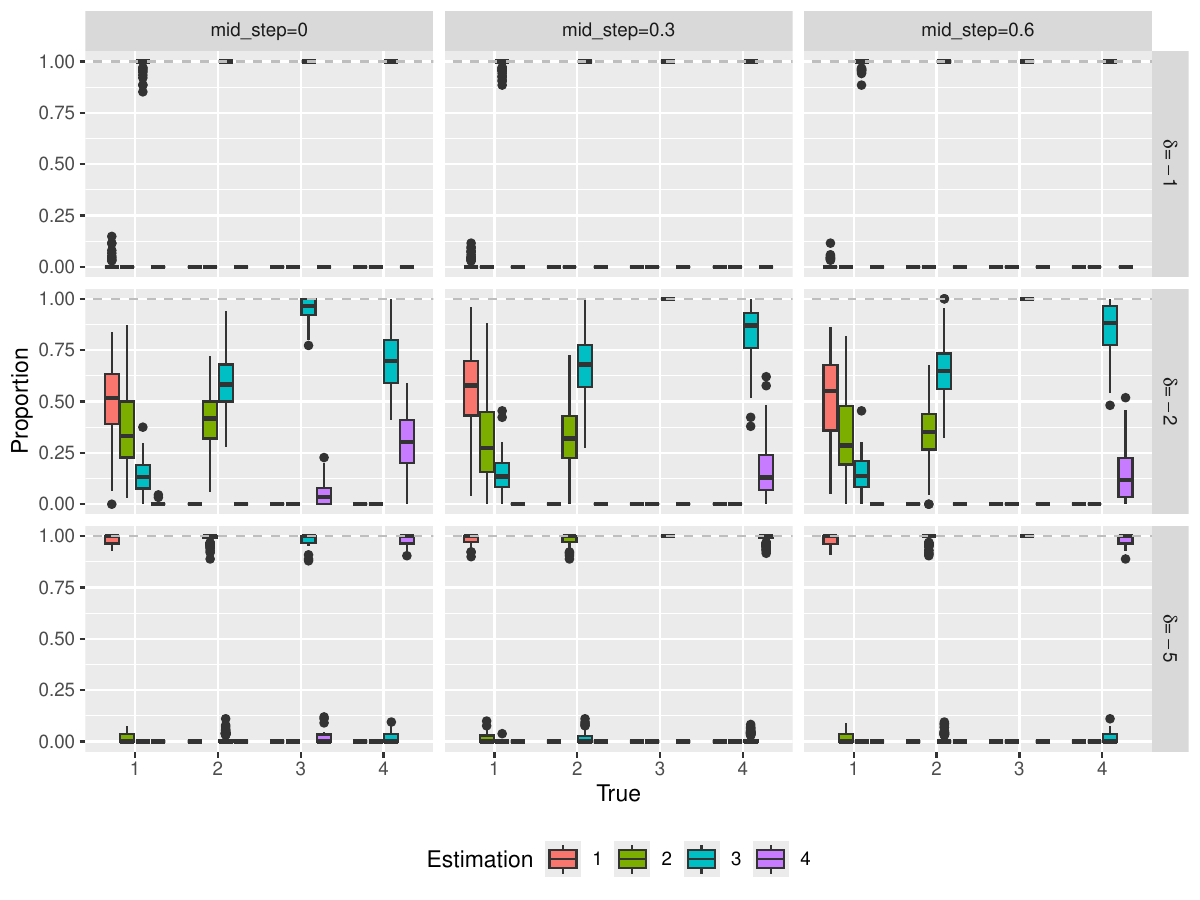}
    \caption{Boxplot of the proportions of cluster membership estimates (color) as a function of the original cluster (x-axis) for three values of $\delta$ (lines) and the different length between the two change-points in cluster $3$ (rows). Only for the case with $Nb_{\text{init}}=10$ initializations and $Nb_{\text{M-step}}=1$ iteration in the M-step.}
    \label{fig:Origine_destination}
\end{figure}

\noindent {\bf{Number of iterations.}} All results improve with the number of initializations $(Nb_{\text{init}})$ as expected, but not with the number of iterations within the M-step $(Nb_{\text{M-step}})$. In the two following studies, these numbers are fixed to $Nb_{\text{init}}=10$ and $Nb_{\text{M-step}}=1$.

\subsubsection{Study 2.} Figures~\ref{fig:asymp:Delta_norm}, \ref{fig:asymp:Hausdorff}, \ref{fig:asymp:Erreur}, and \ref{fig:asymp:Origine_destination} show that increasing the number of profiles reduces the variance of the estimates, namely $\delta$, the locations of the change-points and the membership of the clusters but does not necessarily improve their accuracy. This is what we also observed when examining more precisely the cluster membership as a function of the true cluster of origin (see Figure~\ref{fig:asymp:Origine_destination}). 

However, as expected and typical in change-point detection frameworks, the location of the change-points improves with the increase of profile length. Surprisingly, though, this does not enhance either the estimation of $\delta$ or its accuracy, nor does it improve the clustering assignment. The estimation of $\delta$ is not consistent with the calculation of the precision estimate (see Section \ref{sec:Fisher}), which states that the estimator becomes more accurate as the number of data increases.

\begin{figure}
    \centering
    \includegraphics[width=\linewidth]{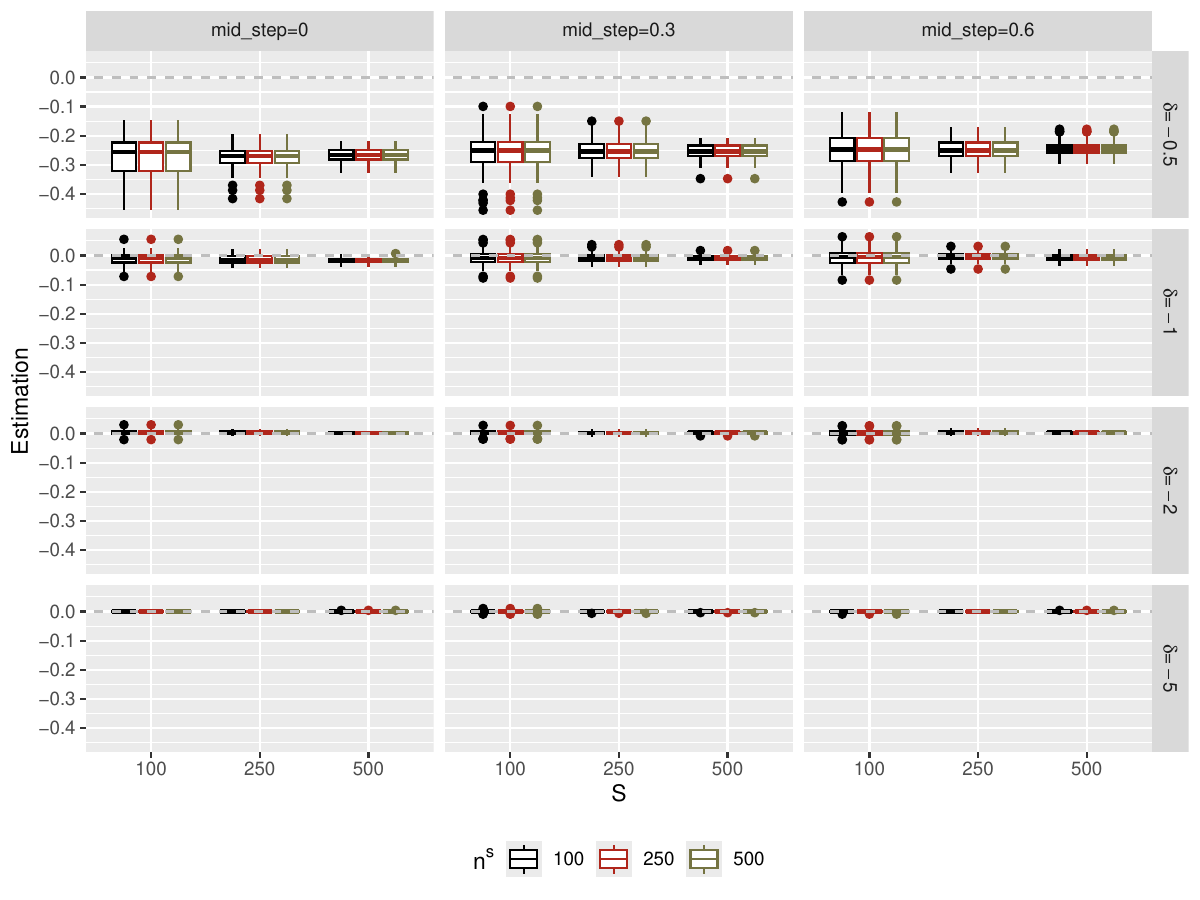}
    \caption{Boxplot of $d_r(\widehat{\delta},\delta)$ for the different values of $\delta$ (lines), the different length between the two change-points in cluster $3$ (columns), the number $S$ of profiles (on the x-axis) and the number $\ns$ of observations (box outline color).}
    \label{fig:asymp:Delta_norm}
\end{figure}

\begin{figure}
    \centering
    \includegraphics[width=\linewidth]{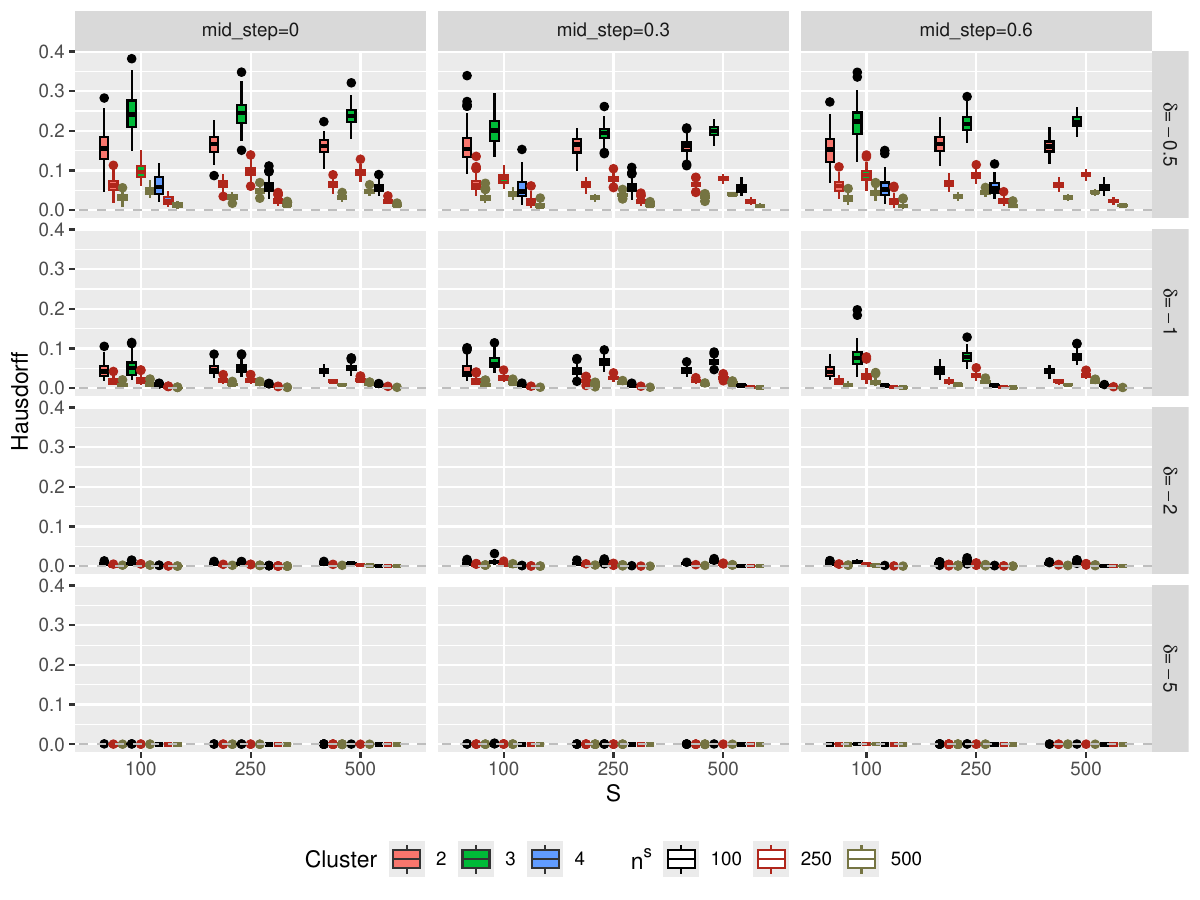}
    \caption{Boxplot of the normalized infinite distance $d\left(\bold{t}^{\left(k\right)},\widehat{\bold{t}}^{\left(k\right)}\right)$ separated according to the true cluster (colors) for the different values of $\delta$ (lines), the different length between the two change-points in cluster $3$ (columns), the number $S$ of profiles (on the x-axis) and the number $\ns$ of observations (box outline color). }
    \label{fig:asymp:Hausdorff}
\end{figure}

\begin{figure}
    \centering
    \includegraphics[width=\linewidth]{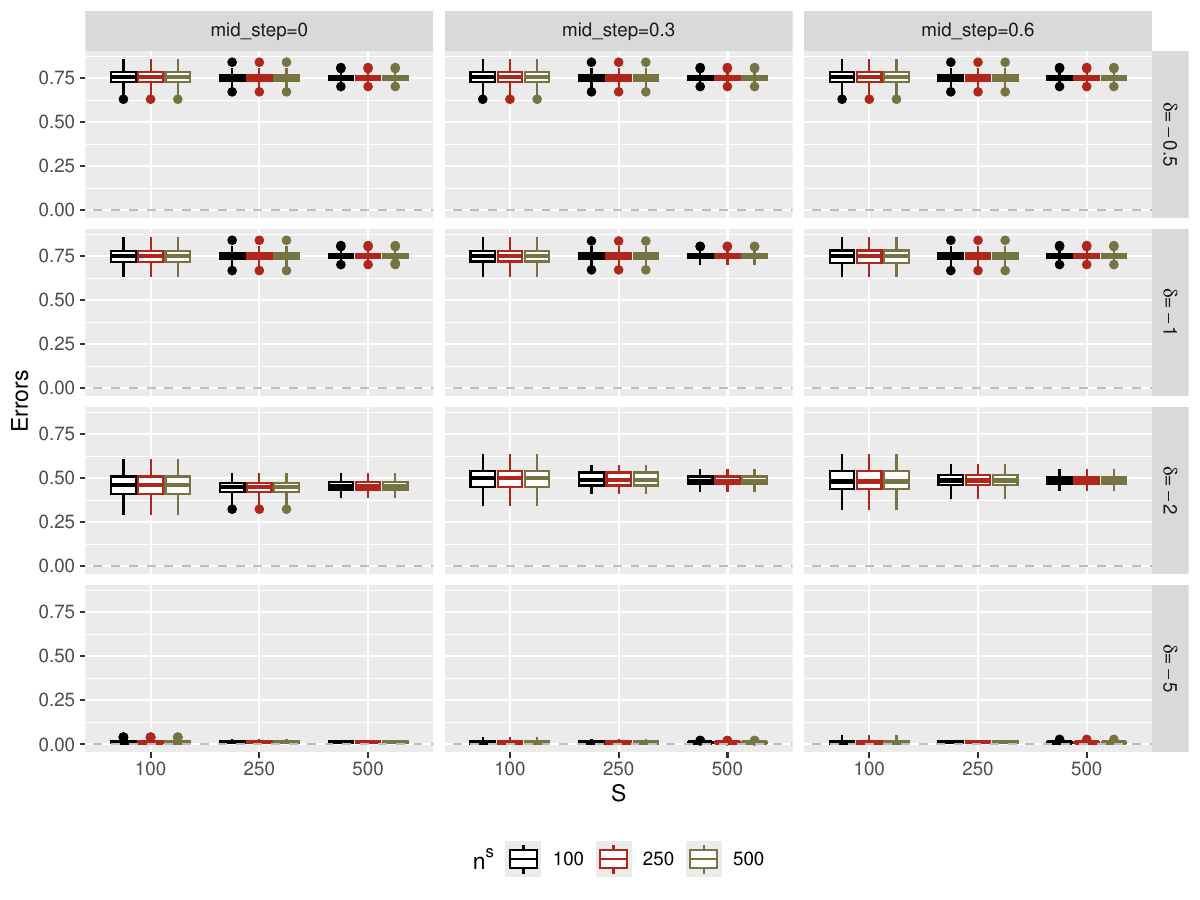}
    \caption{Boxplot of the misclassification rate for the different length $\ns$ for the different values of $\delta$ (lines), the different length between the two change-points in cluster $3$ (columns), the number $S$ of profiles (on the x-axis) and the number $\ns$ of observations (box outline color).}
    \label{fig:asymp:Erreur}
\end{figure}

\begin{figure}
    \centering
    \begin{tabular}{c}
    $S=n^s=100$\\
    \includegraphics[width=\linewidth]{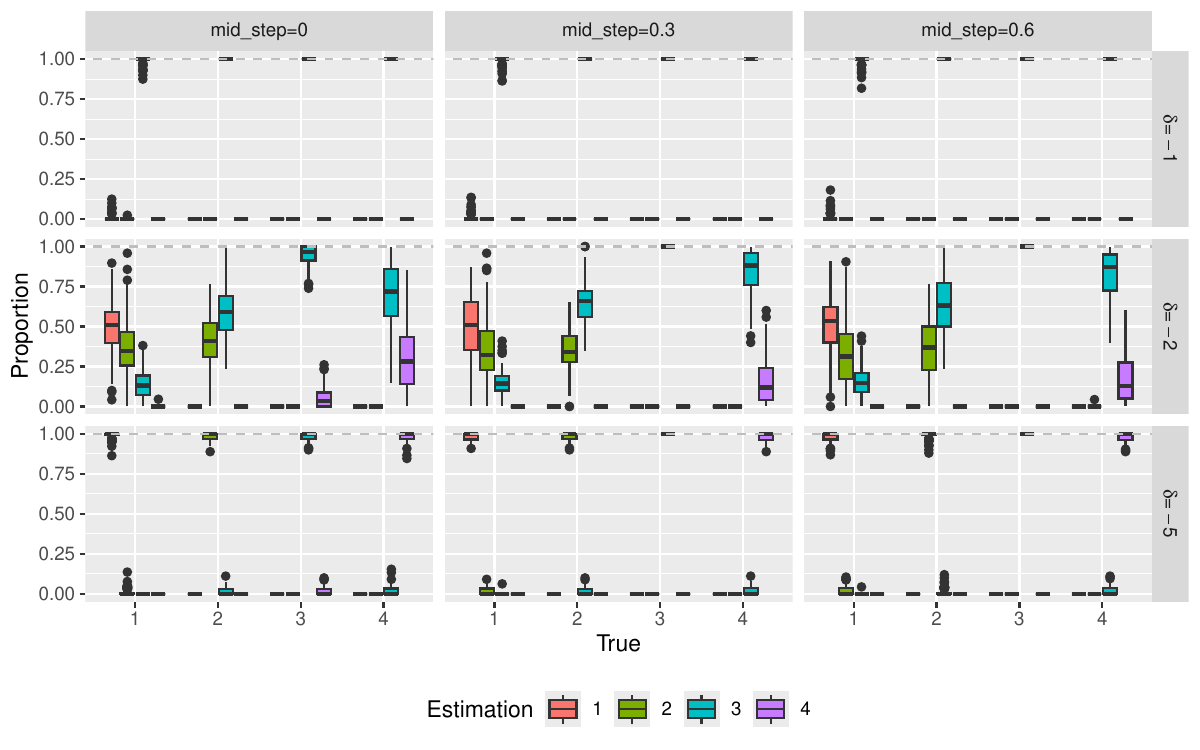}\\
    \hline
    \\
    $S=n^s=500$\\
    \includegraphics[width=\linewidth]{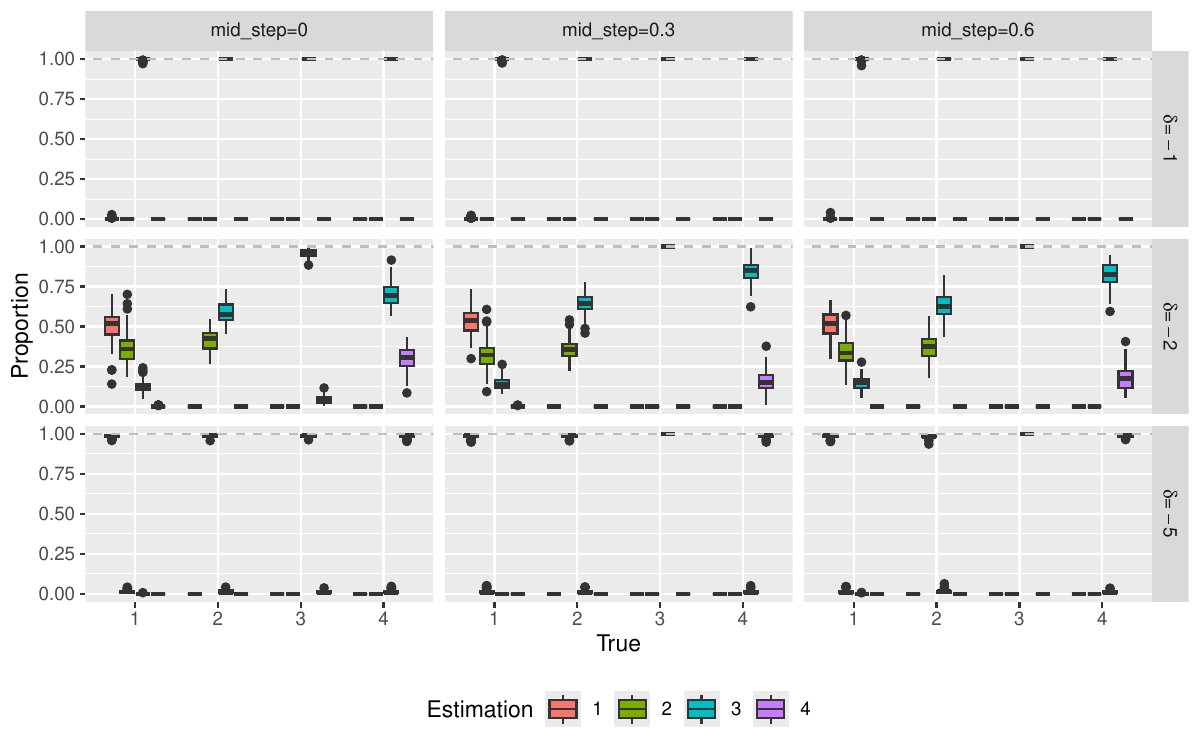}\\
    \\
    \end{tabular}
    \caption{Boxplot of the proportions of cluster membership estimates (color) as a function of the original cluster (x-axis) for three values of $\delta$ (lines) and the different length between the two change-points in cluster $3$ (columns) with $S=n^s=100$ (top graphic) and $S=n^s=500$ (bottom graphic).}
    \label{fig:asymp:Origine_destination}
\end{figure}

\subsubsection{Study 3.} In Figure~\ref{fig:delta:Erreur}, several insights can be drawn regarding the impact of overestimating the parameter $\delta$. When $\delta$ is very small - meaning the change-point detection task is inherently difficult - strongly overestimating its value (e.g., using 1.5$\delta$ or 2$\delta$) actually reduces (very slightly) the number of errors. We need to overestimate a lot in order to compensate the segmentation with two change-points which lead inevitably to the best fit. 
As $\delta$ increases (the detection problem becomes easier), a slight overestimation (by a factor of approximately 1.1 to 1.2) performs better than using the true value. This improvement is likely due to the fact that clusters $2$ and $4$ become more easily distinguishable from cluster $3$, leading to more accurate segmentation. However, when $\delta$ is overestimated too much in this case (i.e., by a factor of 1.5 or 2 in the case where the true $\delta$ value is high), the performance deteriorates. 
Additionally, the effect of overestimating $\delta$ is influenced by the length of the central segment in cluster 3. This segment length only significantly affects the results when $\delta$ is overestimated by a factor of 1.5. Indeed, if the threshold is overestimated (e.g., set to 1.5$\delta$), the change-points become harder to detect. In this case, a longer central segment helps to better distinguish segments and facilitates detection, thereby improving classification.

\begin{figure}
    \centering
    \includegraphics[width=\linewidth]{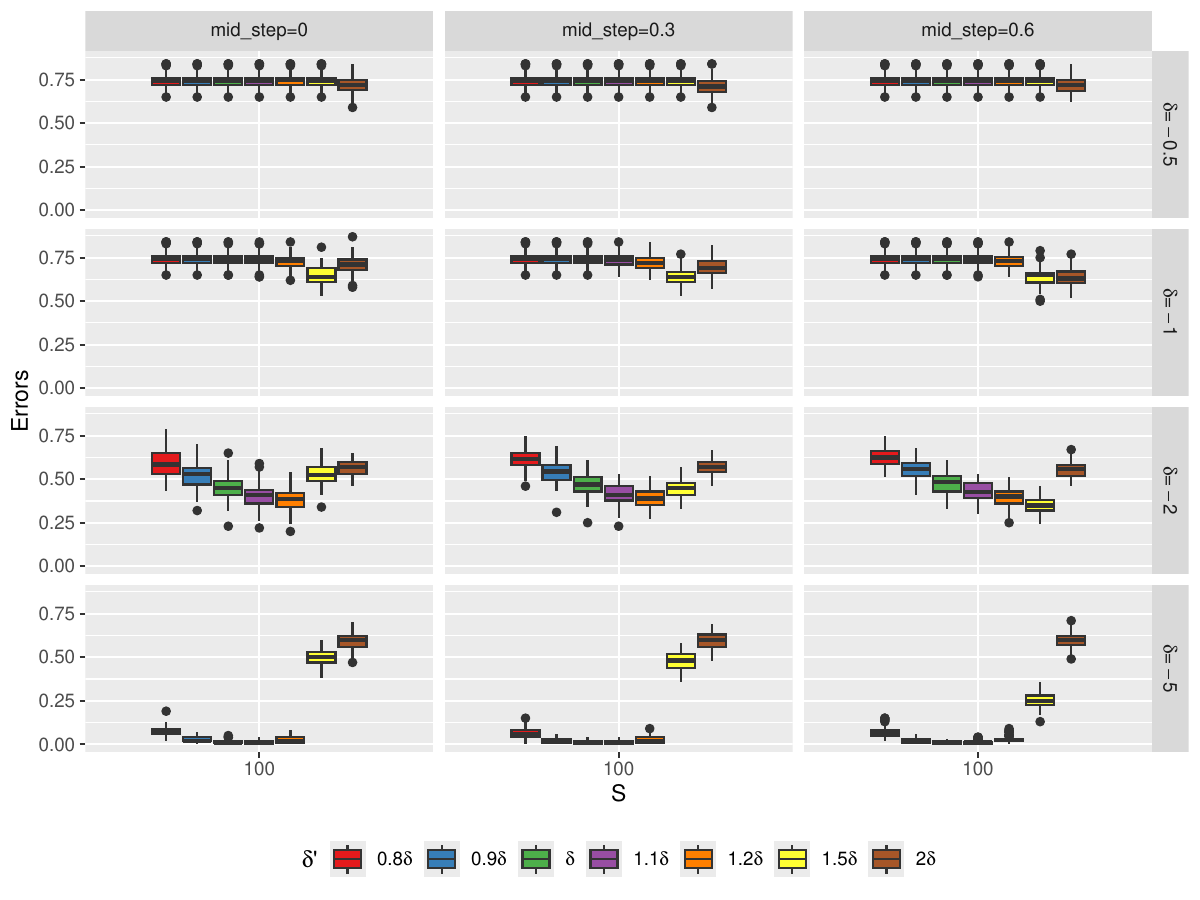}
    \caption{Boxplot of the misclassification rate for the different values of the true value $\delta$ (lines), the different length between the two change-points in cluster $3$ (columns) and the value of $\delta'\in\{0.8\delta,0.9\delta,\delta,1.1\delta,1.2\delta,1.5\delta,2\delta\}$ chosen (colors). For these simulations, the number of profiles is $S=100$ for $n^s=250$ observations by profile.}
    \label{fig:delta:Erreur}
\end{figure}

\FloatBarrier

\section{Application} \label{sec:application}

\noindent {\bf{Dataset.}} Data were collected from Single Molecule Photo Bleaching (SMPB) experiments. Fluorescent proteins were adsorbed onto a plasma-cleaned glass coverslip at an extremely low concentration (approximately $10$ to $50$ pM), allowing them to appear as individual spots. These spots were continuously illuminated with a high-power green laser ($488$ nm), causing their fluorescence to bleach over time. The fluorescence intensity of each spot was recorded continuously at a rate of $20$ images per second for $15$ to $30$ seconds, using a Total Internal Reflection Fluorescence (TIRF) microscope. The detailed experimental procedure is described in \citep{stoppin2020studying}. Two fluorescent proteins were used in the SMPB experiments: sfGFP (superfolder Green Fluorescent Protein \citep{pedelacq2006engineering}), a monomeric protein expected to bleach in a single step, and sfGFP-EB3, a dimeric fusion protein \citep{de2010molecular} anticipated to bleach in two distinct steps. In total, $120$ fluorescence intensity profiles were obtained, $85$ for sfGFP and $35$ for sfGFP-EB3. \\

\noindent {\bf{Results.}} The distribution of the cluster membership is  represented in Figure \ref{fig:enter-label}
separated according the type of the protein. As expected, sfGFP proteins are mainly classified in cluster $2$ and sfGFP-EB3 proteins in cluster $3$. None is classified in cluster $4$. The estimated value of $\delta$ is $-29.4$. The mean estimate of the standard deviation $\overline{sigma_s}$ on all profiles is 11.7 (with a variability of 13.2); the closest case in our simulations is $\delta=-2$, i.e. a case of moderately easy detection.

After verification, it appears that the apparent misclassification cases (5 sfGFP in cluster $3$ instead of cluster $2$; 9 sfGFP-EB3 in cluster $2$ instead of cluster $3$) are due to biological heterogeneity of the samples (sfGFP proteins that make dimers; misfolded sfGFP-EB3 proteins that fail to form a dimer). So our segmentation procedure revealed very efficient in determining the oligomerization state of the proteins. We also observed the segmentation is efficient in segmenting challenging traces with a $\delta$ value in the same order of magnitude than the noise of the profile (see Figure \ref{fig:segmentation_free_constrained} for example).

\begin{table}
 \centering
 \caption{Cluster membership of the sfGFP and sfGFP-EB3 profiles.}
 \label{tab:Appli_Cluster_Member}
 \begin{tabular}{l|cccc|cccc}
   & \multicolumn{4}{@{}c@{}}{sfGFP} & \multicolumn{4}{@{}c@{}}{sfGFP-EB3} \\
    cluster number &  1 & 2 & 3 & 4 &  1 & 2 & 3 & 4 \\
   \hline     \hline
    count  & 5 & 75 & 5 & 0 &0& 9 & 26 & 0 \\
 \end{tabular}
\end{table}

\begin{figure}
\centering
\includegraphics[width=\linewidth]{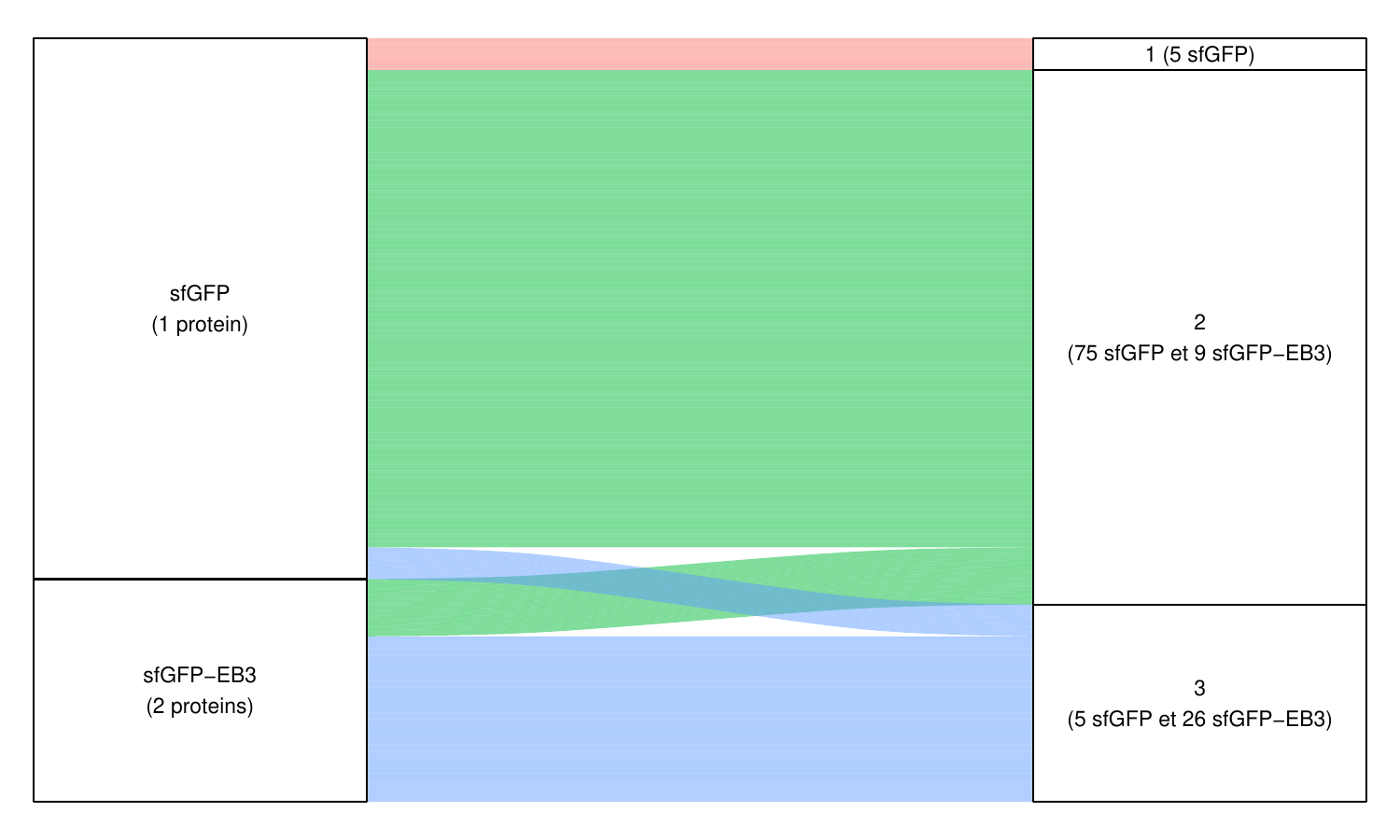}
    \caption{Cluster membership of the sfGFP and sfGFP-EB3 profiles where the colors are associated to the estimated cluster (salmon for the cluster $1$, green for $2$ and blue for $3$).}
    \label{fig:enter-label}
\end{figure}

\section{Discussion} \label{sec:discussion}

\noindent {\bf{Classification procedure for multiple profiles.}} We proposed a unified classification model for multiple profiles, based on their segmented profiles which are linked through the size of the jumps. This work was motivated by an application in neuroscience, in which the number of clusters and their profile structures were predefined: four scenarios were considered, corresponding to a profile with no change-point, a profile with a single change-point and a jump of size $\delta$, a profile with two change-points, each with a jump of size $\delta$, and a profile with one change-point and a jump of size $2\delta$. The number of clusters and the number of change-points were therefore not in this study treated as parameters to choose. The procedure involves a EM algorithm for parameter inference, during which combined with an exhaustive search to locate the change-points is performed. The classical dynamic programming (DP) algorithm could not be used in this context, as the segment-additivity condition required for its application is not satisfied here. Our results showed that cluster $3$, corresponding to profiles with two change-points, was the most attractive (i.e., most frequently selected by the algorithm). \\

\noindent {\bf{Possible Improvements.}} The procedure can be improved in two directions:
\begin{itemize}
\item From an algorithmic perspective, computational efficiency could likely be improved by adopting more recent dynamic programming approaches, such as the gfpop algorithm proposed by \cite{runge2023gfpop}. This algorithm is not only faster but also allows for the integration of constraints between successive segments, which aligns well with the requirements of our setting.
\item From a statistical perspective, our model does not infer the number of change-points; instead, each profile is assumed to contain either $0$, $1$, or $2$ change-points. Naturally, increasing the number of change-points improves the fit to the data, which may cause profiles with a single change-point to be misclassified into the cluster characterized by two closely spaced change-points (cluster $3$). To mitigate this issue, one could introduce a penalized criterion that accounts for both the number and the length of segments, such as the multiscale penalty proposed by \cite{verzelen2023optimal}.
\end{itemize}

\newpage

\begin{appendices}

\section{Understand the misclassification from cluster $4$ to cluster $3$} \label{sec:P34}

In this section, we study the probability of misclassifying a profile in cluster $3$ instead of its true cluster $4$, as a function of the jump $\delta$ and the segment lengths. First, let us introduce some notations and assumptions. \\

Let $Y^\star$ be a profile with length $n$ that belongs to cluster $4$. Its true distribution parameters are indexed by $^\star$: 
$$
Y^\star_t \sim \mathcal{N}(m_4^\star(t),\sigma^{\star,2}_4)
$$
where $m_4(t)^\star=\mu_{4}^\star+2  \  \delta \ \mathbb 1_{t>t_{41}^\star}$ or $m_{4}^\star =\mu_4^\star +T_4^\star \delta$ with $T_4^\star=\begin{bmatrix}  0_{n_{41}^\star} \\ 2 \mathbb 1_{n_{42}^\star} \end{bmatrix}$. We note $\epsilon^\star_t=Y^{\star}_t-m_4^\star(t)=Y^{\star}_t-(T_4^\star)_t \delta-\mu_4^\star$. 

The segmentation in cluster $3$ is given by $\llbracket 1, t_{31} \rrbracket \cup \llbracket t_{31}+1, t_{32} \rrbracket \cup \llbracket t_{32}+1, n \rrbracket$, and is denoted by $T_3$ for simplicity. The true segmentation is $\llbracket 1, t_{41}^\star \rrbracket \cup \llbracket t_{41}^\star+1, n \rrbracket$, and is denoted by $T_4^\star$. We assume that
\begin{itemize}
    \item the segmentation in cluster $4$ is the true one and the unique change-point of $T_4^\star$ belongs to the second segment of $T_3$: $t_{31} < t_{41}^\star < t_{32}$,
    \item the proportions of the two clusters are known and equal, i.e., $\pi_3 = \pi_4$.
\end{itemize}

\bigskip 

The profile $Y^\star$ will be classified in cluster $3$ instead of cluster $4$ if 
\begin{eqnarray*}
&& \tau_3^\star = \mathbb{P}(Z^\star_3=1 \vert Y^\star) > \tau_4^\star = \mathbb{P}(Z^\star_4=1 \vert Y^\star)  \Leftrightarrow  \frac{\tau_3^\star}{\tau_4^\star} >1, \\
\end{eqnarray*}
or again if
\begin{eqnarray*}
&&\log{\left (\frac{\tau_3^\star}{\tau_4^\star}\right )} >0  \Leftrightarrow  \log{\left (\frac{\pi_3}{\pi_4}\right )}+\log{\left (\frac{f(Y^\star;\delta,T_3,\widehat{\mu}_3,\widehat{\sigma}^2_3}{f(Y^\star;\delta,T_4^\star,\widehat{\mu}_4,\widehat{\sigma}^2_4}\right )} >0, \\
\end{eqnarray*}
or again, by assumption, if
\begin{eqnarray*}
&&\log{\left (\frac{f(Y^\star;\delta,T_3,\widehat{\mu}_3,\widehat{\sigma}^2_3}{f(Y^\star;\delta,T_4^\star,\widehat{\mu}_4,\widehat{\sigma}^2_4}\right )} >0  \Leftrightarrow Q_n(T_3)-Q_n(T_4^\star)<0,\\
\end{eqnarray*}
where 
$$
Q_n(T_k)=Q_n(Y^\star;T_k,\widehat{\mu}_k,\delta)= \sum_{t=1}^n (Y^\star_t-\widehat{m}_k(t))^2=  \sum_{t=1}^n (Y^\star_t-\widehat{\mu}_k(t)-(T_k)_t \delta)^2,
$$
with we recall that $\widehat{\mu}_k:=\widehat{\mu}_k(T_k)=\frac{1}{n}\sum_{t=1}^n (Y^{\star}_t-(T_k)_t \delta)$.

\bigskip

Proposition \ref{prop_princ} provides the probability of misclassifying a profile from the true cluster $4$ as belonging to the cluster $3$. As expected, this probability increases as
\begin{itemize}
    \item the signal-to-noise ratio at the change position $|\delta|/\sigma_4^\star$ decreases,
    \item when the  segmentations $T_3$ and $T_4^\star$ are closed (the value of $V$, given in the proposition, decreases).
\end{itemize}
This reflects the idea that recovering the true cluster, or distinguishing between the two clusters, becomes more difficult when either the change-point detection itself is challenging (small jump compared to the variability) and/or the true change-point is close to the two change-points of segmentation $T_3$. This point is more marked for the size of the jump. Figure \ref{fig:prob} shows the behavior of this probability as a function of $n_3 = n_2 = n_m$ and for different values of $\delta$ when $\sigma^\star_4 = 1$. For large $|\delta|$ (about $-3$) and small $n_m$, the probability to classify the profile in cluster $3$ is not null.

\begin{center}
\begin{figure}
    \centering
    \includegraphics[scale=0.25]{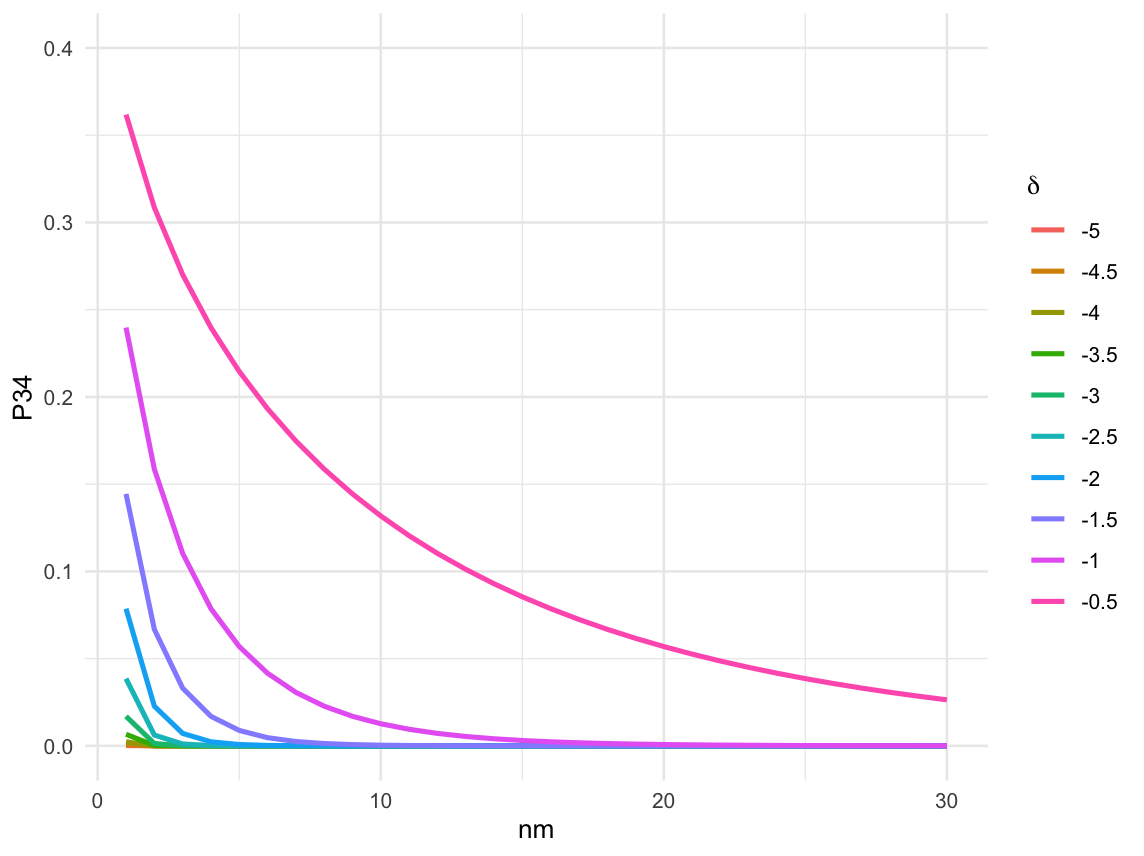}
    \caption{Probability of misclassification given by \eqref{eq:proba34} as a function of $n_2=n_3=n_m$ and for different values of $\delta$.}
    \label{fig:prob}
\end{figure}
\end{center}

\begin{proposition}\label{prop_princ}
We have that 
\begin{equation} \label{eq:proba34}
\mathbb{P}(Q_n(T_3)-Q_n(T_4^\star)<0)=1-F \left (-\frac{\delta \sqrt{V}}{2 \sigma^\star_4} \right),
\end{equation}
where $F$ is the cumulative function of $\mathcal{N}(0,1)$ and $V=n_2+n_3-\frac{(n_2-n_3)^2}{n}>0$ with $n_2=t_{41}^\star-t_{31}+1$ and $n_3=t_{32}-t_{41}^\star+1$.
\end{proposition}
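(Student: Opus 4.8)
The plan is to show that $Q_n(T_3)-Q_n(T_4^\star)$ is an \emph{affine function of a single univariate Gaussian}, after which the stated probability follows by a standardisation. Throughout I work with the noise $\epsilon^\star=(\epsilon^\star_t)_t$, which is i.i.d.\ $\mathcal{N}(0,\sigma^{\star,2}_4)$, and write $\bar\epsilon^\star=\frac1n\sum_{t=1}^n\epsilon^\star_t$. First I would reduce each fitted residual to a centred-noise-plus-deterministic-shift form. Since $Y^\star_t=\mu_4^\star+(T_4^\star)_t\delta+\epsilon^\star_t$, the cluster-$4$ baseline estimate satisfies $\widehat\mu_4=\mu_4^\star+\bar\epsilon^\star$, so the true-model residual collapses to $Y^\star_t-\widehat\mu_4-(T_4^\star)_t\delta=\epsilon^\star_t-\bar\epsilon^\star$ and $Q_n(T_4^\star)=\sum_{t}(\epsilon^\star_t-\bar\epsilon^\star)^2$. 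For cluster $3$, introducing the deterministic increment $D_t:=(T_4^\star)_t-(T_3)_t$ and $\bar D=\frac1n\sum_t D_t$, one gets $\widehat\mu_3=\mu_4^\star+\delta\bar D+\bar\epsilon^\star$, whence the wrong-model residual is $Y^\star_t-\widehat\mu_3-(T_3)_t\delta=\delta(D_t-\bar D)+(\epsilon^\star_t-\bar\epsilon^\star)$.

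Second, expanding the squares and using $\sum_t(D_t-\bar D)=0$ to kill the $\bar\epsilon^\star$ cross term, I obtain
\[
Q_n(T_3)-Q_n(T_4^\star)=\delta^2\sum_{t=1}^n(D_t-\bar D)^2+2\delta\sum_{t=1}^n(D_t-\bar D)\,\epsilon^\star_t.
\]
Setting $V:=\sum_{t=1}^n(D_t-\bar D)^2$, the linear term $W:=\sum_t(D_t-\bar D)\epsilon^\star_t$ is a fixed linear combination of the $\epsilon^\star_t$, so $W\sim\mathcal{N}(0,\sigma^{\star,2}_4 V)$ and the difference $\delta^2V+2\delta W$ is Gaussian. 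Writing $W=\sigma^\star_4\sqrt{V}\,N$ with $N\sim\mathcal{N}(0,1)$ and dividing by $2|\delta|\sigma^\star_4\sqrt{V}>0$,
\[
\mathbb{P}\big(Q_n(T_3)-Q_n(T_4^\star)<0\big)=\mathbb{P}\Big(\mathrm{sgn}(\delta)\,N<-\tfrac{|\delta|\sqrt{V}}{2\sigma^\star_4}\Big)=F\Big(-\tfrac{|\delta|\sqrt{V}}{2\sigma^\star_4}\Big),
\]
which, since $\delta<0$ gives $|\delta|=-\delta$, is exactly $1-F\big(-\delta\sqrt{V}/(2\sigma^\star_4)\big)$.

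Finally I would evaluate $V$ explicitly. Under the assumption $t_{31}<t_{41}^\star<t_{32}$, the increment $D_t$ equals $-1$ on the block $\llbracket t_{31}+1,t_{41}^\star\rrbracket$ (length $n_2$), $+1$ on $\llbracket t_{41}^\star+1,t_{32}\rrbracket$ (length $n_3$), and $0$ elsewhere. Hence $\sum_t D_t=n_3-n_2$ and $\sum_t D_t^2=n_2+n_3$, so $V=\sum_t D_t^2-n\bar D^2=n_2+n_3-(n_2-n_3)^2/n$, the claimed quantity, which is positive by Cauchy--Schwarz (or directly, being the residual sum of squares of $D$ after removing its mean).

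The argument carries no genuine probabilistic difficulty once the difference is recognised as affine in the single Gaussian $W$. The main work is \emph{bookkeeping}: verifying that each profile-specific baseline estimate $\widehat\mu_k$ absorbs precisely the sample mean of the noise (so the two residuals differ only by the deterministic shift $\delta(D_t-\bar D)$), the region-by-region evaluation of $D_t$ to identify $V$, and the sign handling of $\delta$ in the standardisation, where dividing by $2\delta$ flips the inequality.
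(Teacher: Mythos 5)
Your proof is correct and follows essentially the same route as the paper: both reduce $Q_n(T_3)-Q_n(T_4^\star)$ to the affine-in-Gaussian form $\delta^2 V + 2\delta W$ with $W\sim\mathcal{N}(0,\sigma^{\star,2}_4 V)$ and then standardize, and your $W=\sum_t(D_t-\bar D)\epsilon^\star_t$ coincides term by term with the paper's $A$. Your bookkeeping via the difference vector $D_t=(T_4^\star)_t-(T_3)_t$ is slightly cleaner, since it makes the identity $\mathrm{Var}(W)=\sigma^{\star,2}_4\sum_t(D_t-\bar D)^2=\sigma^{\star,2}_4 V$ and the positivity of $V$ immediate, whereas the paper obtains the same quantities by an explicit segment-by-segment expansion over the intersection partition $J_1,\dots,J_4$.
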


\begin{proof}
We consider the segmentation of the grid $\llbracket 1, n \rrbracket$ resulting from the intersection of the segmentations $T_3$ and $T_4^\star$, that is $\bigcup_i J_i$ where
\begin{itemize}
\item $J_1=\llbracket 0, t_{31}\rrbracket$ with length $n_1$, 
\item $J_2=\llbracket t_{31}+1,t_{41}^\star \rrbracket$ with length $n_2$,
\item $J_3=\llbracket t_{41}^\star+1, t_{32}\rrbracket$ with length $n_3$, 
\item $J_4=\llbracket t_{32}+1,n\rrbracket$ with length $n_4$.
\end{itemize}
We define the two following profiles which represent the profile $Y^\star$ centered according to segmentations $T_3$ and $T_4^\star$: 
$$
\widetilde{Y}^{\star,3}=Y^{\star}-T_3 \delta, \ \ \ \  \text{and} \ \ \ \  \widetilde{Y}^{\star,4}=Y^{\star}-T_4^\star \delta
$$
On each segment $J_i$, we get
\begin{itemize}
\item on $J_1$, $\widetilde{Y}^{\star,3}_t={Y}^{\star}_t=\widetilde{Y}^{\star,4}_t$, 
\item on $J_2$, $\widetilde{Y}^{\star,3}_t={Y}^{\star}_t-\delta=\widetilde{Y}^{\star,4}_t-\delta$,
\item on $J_3$, $\widetilde{Y}^{\star,3}_t={Y}^{\star}_t-\delta=\widetilde{Y}^{\star,4}_t+\delta$,
\item on $J_4$, $\widetilde{Y}^{\star,3}_t={Y}^{\star}_t-2 \delta=\widetilde{Y}^{\star,4}_t $.
\end{itemize}
A simulated profile ${Y}^{\star}$ with its associated centered profiles $\widetilde{Y}^{\star,3}$ and $\widetilde{Y}^{\star,4}$ are represented in Figure \ref{fig:Datas}.
\begin{figure}[!ht]
    \centering
    \includegraphics[width=\linewidth]{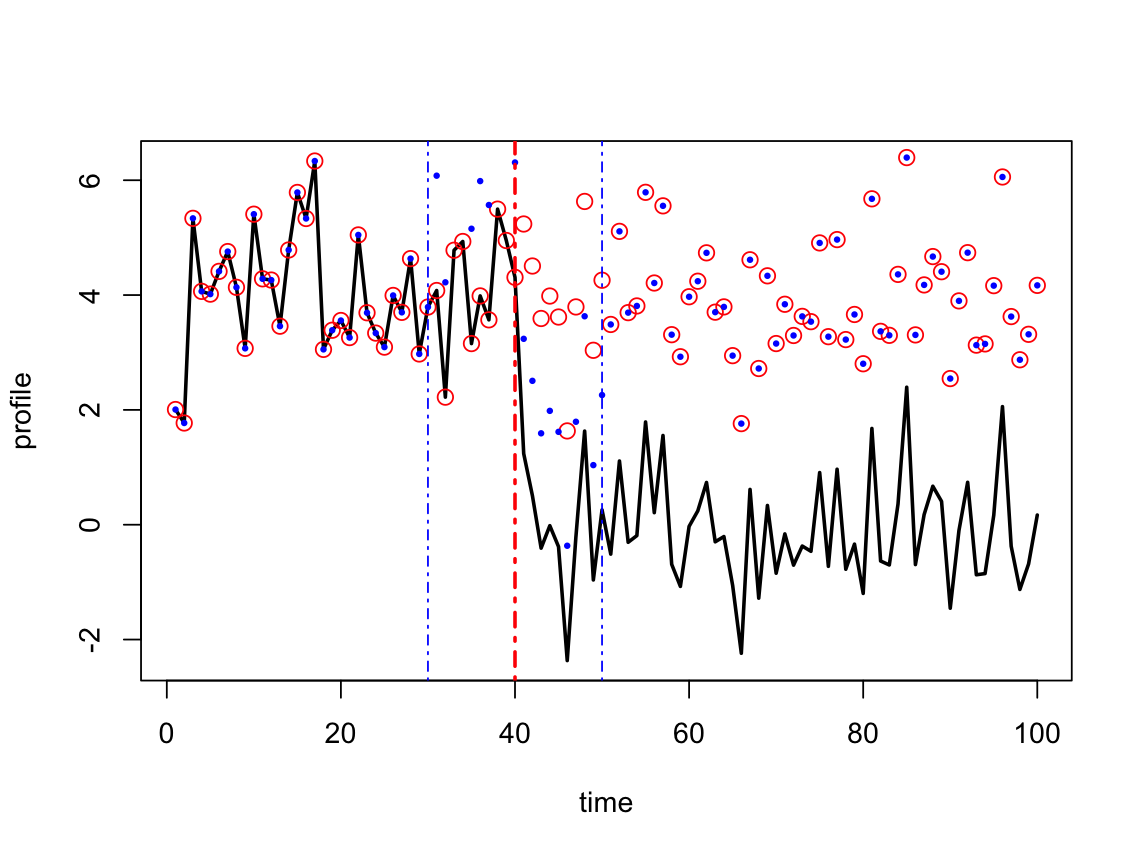}
    \caption{In black: ${Y}^{\star}$, in blue: $\widetilde{Y}^{\star,3}$ and in red: $\widetilde{Y}^{\star,4}$. The dotted vertical lines are the change-points $t_{31}$ and $t_{32}$ in blue and $t_{41}^\star$ in red.}
    \label{fig:Datas}
\end{figure}
We have that  
\begin{eqnarray*}
Q_n(T_3)&=& \sum_{t=1}^n (Y^\star_t-\widehat{m}_3(t))^2= \sum_{t=1}^n (\widetilde{Y}^{\star,3}_t-\widehat{\mu}_3)^2 \\
&=& \sum_{t \in J_1 \cup J_4} (\widetilde{Y}^{\star,4}_t-\widehat{\mu}_3)^2+\sum_{t \in J_2} (\widetilde{Y}^{\star,4}_t-\delta-\widehat{\mu}_3)^2+\sum_{t \in J_3} (\widetilde{Y}^{\star,4}_t+\delta-\widehat{\mu}_3)^2 \\
&=& \sum_{t=1}^n (\widetilde{Y}^{\star,4}_t-\widehat{\mu}_3)^2+\delta^2 (n_2+n_3)+2\delta \left ( \sum_{t \in J_3} (\widetilde{Y}^{\star,4}_t-\widehat{\mu}_3)- \sum_{t \in J_2} (\widetilde{Y}^{\star,4}_t-\widehat{\mu}_3)  \right ).
\end{eqnarray*}
Since $\widehat{\mu}_3=\widehat{\mu}_4-\frac{\delta}{n} (n_2-n_3)$ and $\widetilde{Y}^{\star,4}_t=\mu_4^\star+\epsilon^\star_t$, we get
\begin{eqnarray*}
Q_n(T_3)&=& 
Q_n(T_4^\star)+2 \frac{\delta}{n} (n_2-n_3) \sum_{t=1}^n (\widetilde{Y}^{\star,4}_t-\widehat{\mu}_4)+\frac{\delta^2}{n} (n_2-n_3)^2+\delta^2 (n_2+n_3)\\
&& +2\delta \left ( \sum_{t \in J_3} (\widetilde{Y}^{\star,4}_t-\widehat{\mu}_4)- \sum_{t \in J_2} (\widetilde{Y}^{\star,4}_t-\widehat{\mu}_4)  \right )+2\frac{\delta^2}{n} (n_2-n_3) (n_3-n_2) \\
&=& Q_n(T_4^\star)+2 \delta \left (  \frac{(n_2-n_3)}{n}\sum_{t=1}^n (\widetilde{Y}^{\star,4}_t-\widehat{\mu}_4) + \sum_{t \in J_3} (\widetilde{Y}^{\star,4}_t-\widehat{\mu}_4)- \sum_{t \in J_2} (\widetilde{Y}^{\star,4}_t-\widehat{\mu}_4)\right ) \\
&& + \delta^2 \left (n_2+n_3-\frac{(n_2-n_3)^2}{n} \right ) \\
&=& Q_n(T_4^\star)+ \delta^2 \left (n_2+n_3-\frac{(n_2-n_3)^2}{n} \right )\\
&& +2 \delta \left (  \frac{(n_2-n_3)}{n} \sum_{t \in J_1 \cup J_4} \widetilde{Y}^{\star,4}_t + \left ( \frac{(n_2-n_3)}{n} +1\right )\sum_{t \in J_3} \widetilde{Y}^{\star,4}_t+ \left ( \frac{(n_2-n_3)}{n} -1\right )\sum_{t \in J_2} \widetilde{Y}^{\star,4}_t \right ) \\
&=& Q_n(T_4^\star)+ \delta^2 \left (n_2+n_3-\frac{(n_2-n_3)^2}{n} \right )\\
&&+2 \delta \left (\frac{(n_2-n_3)}{n} \sum_{t \in J_1 \cup J_4} \epsilon^\star + \left ( \frac{(n_2-n_3)}{n} +1\right )\sum_{t \in J_3}\epsilon^\star+ \left ( \frac{(n_2-n_3)}{n} -1\right )\sum_{t \in J_2} \epsilon^\star \right ). \\
\end{eqnarray*}
Finally, we have that
\begin{eqnarray*}
Q_n(T_3)-Q_n(T_4^\star)&=& 2 \delta A+ \delta^2 V,
\end{eqnarray*}
with
$$
A=\frac{(n_2-n_3)}{n}\sum_{J_1 \cup J_4} \epsilon_t^\star + \left ( \frac{(n_2-n_3)}{n} +1\right )\sum_{t \in J_3} \epsilon_t^\star+ \left ( \frac{(n_2-n_3)}{n} -1\right )\sum_{t \in J_2} \epsilon_t^\star,
$$
and
$$
V=n_2+n_3-\frac{(n_2-n_3)^2}{n}>0.
$$
Since $\epsilon_t^\star \sim \mathcal{N}(0,\sigma^{\star,2}_4)$, it is straightforward to see that
$$
A \sim \mathcal{N}(0,\sigma^{\star,2}_4 V),
$$
which concludes the proof. 
\end{proof}

\section{Calculation of the complete-data information matrix}  \label{sec:Calcul-Fisher}

In this section, we calculate $\mathcal{I}(\widehat{\theta})= \mathbb{E}_{X | Y} [I^c(\theta)]\big|_{\theta=\widehat{\theta}}$ where
\begin{equation*}
I^c(\theta)=- \frac{\partial^2 }{\partial \theta \partial {}^t \theta} \log{\mathbb{P}(Y,Z;\theta)},
\end{equation*}
and we recall that 
\begin{eqnarray*}
\log{\mathbb{P}(Y,Z;\theta)}&=&\sum_{s=1}^S \sum_{k=1}^4 Z_k^s \left ( \log(\pi_k)-\frac{n^s}{2} \log{(2 \pi \sigma_{k}^{2,s})}-\frac{1}{2\sigma_{k}^{2,s}} \| Y^s- \mu_{k}^{s} -T^s_k \delta\|_2^2 \right ).
\end{eqnarray*}

We first focus on the 'diagonal' term of the information matrix, thus for a second derivative of the complete-data information matrix w.r.t. each parameter: 

\begin{itemize}
\item proportion parameters $\pi_k$. Since $\sum_{k=1}^4 \pi_k=1$, it is sufficient to differentiate with respect to the first three components. For $k=1,2,3$, we get
\begin{eqnarray*}
\frac{\partial^2 }{\partial \pi_k^2} \log{\mathbb{P}(Y,Z;\theta)}&=& \frac{\partial^2 }{\partial \pi_k^2} \sum_{s=1}^S \left ( \sum_{k=1}^3  Z^s_k \log{\pi_k}+ Z^s_4 \log{(1-\sum_{k=1}^3 \pi_k)} \right)   = -\sum_{s=1}^S \left ( \frac{Z^s_k}{\pi_k^2}+\frac{Z^s_4}{\pi_4^2}\right ),
\end{eqnarray*}
thus it is straightforward to see that
\begin{eqnarray*}
\mathbb{E}_{X | Y} [I^c(\pi)] &=& \sum_{s=1}^S  
\begin{bmatrix}
\frac{\tau^s_1}{\pi_1^2}+\frac{\tau^s_4}{\pi_4^2} & \frac{\tau^s_4}{\pi_4^2} & \frac{\tau^s_4}{\pi_4^2} \\
\frac{\tau^s_4}{\pi_4^2} & \frac{\tau^s_2}{\pi_2^2}+\frac{\tau^s_4}{\pi_4^2} & \frac{\tau^s_4}{\pi_4^2} \\
\frac{\tau^s_4}{\pi_4^2} & \frac{\tau^s_4}{\pi_4^2} & \frac{\tau^s_3}{\pi_3^2}+\frac{\tau^s_4}{\pi_4^2},
\end{bmatrix}
\end{eqnarray*}
and since $\widehat{\pi}_k=\sum_{s} \tau^s_k/S$, we obtain
\begin{eqnarray*}
\mathbb{E}_{X | Y} [I^c(\pi)]\big|_{\pi=\widehat{\pi}} &=& S  
\begin{bmatrix}
\frac{1}{\widehat{\pi}_1}+\frac{1}{\widehat{\pi}_4} & \frac{1}{\widehat{\pi}_4} & \frac{1}{\widehat{\pi}_4} \\
\frac{1}{\widehat{\pi}_4}& \frac{1}{\widehat{\pi}_2}+\frac{1}{\widehat{\pi}_4} & \frac{1}{\widehat{\pi}_4} \\
\frac{1}{\widehat{\pi}_4}& \frac{1}{\widehat{\pi}_4} & \frac{1}{\widehat{\pi}_3}+\frac{1}{\widehat{\pi}_4},
\end{bmatrix}
\end{eqnarray*}

\item jump parameters $\delta$. To obtain the result, simply observe that
\begin{eqnarray*}
\frac{\partial^2 }{\partial \delta^2} \log{\mathbb{P}(Y,Z;\theta)}&=& - \sum_{s=1}^S \sum_{k=1}^4 \frac{ Z^s_k}{\widehat{\sigma}^{2,s}_k} \ {}^t {T_s^k} T_s^k.
\end{eqnarray*}

\item mean parameters $\mu_k^s$. Similarly,  
\begin{eqnarray*}
\frac{\partial^2 }{\partial (\mu_k^{s})^2} \log{\mathbb{P}(Y,Z;\theta)}&=& - \frac{n_s  Z^s_k}{{\sigma}^{2,s}_k} 
\end{eqnarray*}

\item variance parameters. We get 
\begin{eqnarray*}
\frac{\partial^2 }{\partial (\sigma_k^{2,s})^2} \log{\mathbb{P}(Y,Z;\theta)}&=&  \frac{Z^s_k}{2} \left ( \frac{n_s}{{\sigma}^{4,s}_k} - \frac{2}{{\sigma}^{6,s}_k} \| Y^s- \mu_{k}^{s} -T^s_k \delta\|_2^2\right). 
\end{eqnarray*}
Taking the conditional expectation at $\widehat{\theta}$, we obtain
\begin{eqnarray*}
\left ( \mathbb{E}_{X | Y} [I^c(\sigma^{2})]\big|_{\theta=\widehat{\theta}} \right )_{ks} &=&  \frac{n_s  \tau^s_k}{\widehat{\sigma}^{4,s}_k}. 
\end{eqnarray*}
\end{itemize}

For the extra-diagonal of the information matrix, we have the following:
\begin{itemize}
\item for the $(\mu,\sigma^2)$ block, 
\begin{eqnarray*}
 \frac{\partial^2 }{ \partial \mu_k^{s} \partial {\sigma}^{2,s}_k} \log{\mathbb{P}(Y,Z;\theta)}&=&  \frac{\partial }{\partial \mu_k^{s}}\left (   \frac{Z^s_k}{2 \sigma^{4,s}_k} \| Y^s- \mu_{k}^{s} -T^s_k \delta\|_2^2\right ) \\
&=&  \frac{Z^s_k}{2 \sigma^{4,s}_k} \left ( -2 \sum_{t=1}^{n_s} Y^s_t+2 \mu_k^{s} n_s +2 \ {}^t \mathbf{1} T_s^k \delta \right )
\end{eqnarray*}
By definition of $\widehat{\mu}_k^s$, we easily obtain that
\begin{eqnarray*}
 \mathbb{E}_{X | Y} [I^c(\mu,\sigma^2)]\big|_{\theta=\widehat{\theta}}  &=& 0 
\end{eqnarray*}
\item for the $(\delta,\mu)$ block, 
\begin{eqnarray*}
 \frac{\partial^2 }{\partial \delta \partial \mu_k^{s}} \log{\mathbb{P}(Y,Z;\theta)}&=&  \frac{\partial }{\partial \delta}  \left ( - \frac{Z^s_k}{2 \sigma^{2,s}_k}  \left ( -2 \ {}^t (Y^s_t) T_s^k +2 \mu_k^{s} n_s +2 \ {}^t \mathbf{1} T_s^k \delta \right ) \right ) \\
 &=& - \frac{Z^s_k \ {}^t \mathbf{1} T_s^k }{ \sigma^{2,s}_k}, 
\end{eqnarray*}
thus
\begin{eqnarray*}
 \mathbb{E}_{X | Y} [I^c(\delta,\mu)]\big|_{\theta=\widehat{\theta}}  &=&  \frac{\tau^s_k \ {}^t \mathbf{1} T_s^k }{ \widehat{\sigma}^{2,s}_k}
\end{eqnarray*}
\item for the $(\delta,\sigma^2)$ block, 
\begin{eqnarray*}
 \frac{\partial^2 }{\partial \delta \partial \sigma_k^{2,s}} \log{\mathbb{P}(Y,Z;\theta)}&=&  \frac{\partial }{\partial \delta}  Z^s_k  \left ( -\frac{n_s}{2  \sigma^{2,s}_k} +\frac{1}{2  \sigma^{4,s}_k} \| Y^s- \mu_{k}^{s} -T^s_k \delta\|_2^2 \right )  \\
 &=&  \frac{Z^s_k}{  \sigma^{4,s}_k} \left ( - \ {}^t (Y^s_t) T_s^k+ \mu_k^{s} \ {}^t \mathbf{1} T_s^k +\delta \ {}^t (T_s^k) T_s^k \right )  \\
 &=& - \frac{Z^s_k}{ \sigma^{4,s}_k} \left (  \ {}^t (Y^s_t) - \mu_k^{s}  -\delta \ {}^t (T_s^k) \right ) \ T_s^k
\end{eqnarray*}
thus
\begin{eqnarray*}
 \mathbb{E}_{X | Y} [I^c(\delta,\sigma^2)]\big|_{\theta=\widehat{\theta}}  &=& 
\frac{\tau^s_k}{ \widehat{\sigma}^{4,s}_k} \left (  {}^t(Y^s_t) - \widehat{\mu}_k^{s}  -\widehat{\delta} \ {}^t (T_s^k) \right )T_s^k
 \end{eqnarray*}
\end{itemize}
\end{appendices}

\section{Competing interests}
No competing interest is declared.

\section{Author contributions statement}

V. Brault, E. Lebarbier, and A. Rosier developed the statistical method and conducted the simulation study. All authors worked on the application on real data, the results of which were analyzed by V. Stoppin-Mellet.

\section{Acknowledgments}

This research has been conducted within the FP2M federation (CNRS FR 2036). All the simulations were carried out on the GRICAD infrastructure (\url{https://gricad.univ-grenoble-alpes.fr}), which is supported by the Grenoble scientific community. This work has been partially supported by MIAI@Grenoble Alpes (ANR-19-P3IA-0003).

\bibliographystyle{abbrvnat}
\bibliography{sample}

\end{document}